\newtheorem{theorem}{Theorem}
\newtheorem{proposition}{Proposition}
\newtheorem{corollary}{Corollary}
\newcommand{\Q}{\mathsf{Q}}
\newcommand{\A}{\mathsf{A}}
\newcommand{\B}{\mathsf{B}}
\newcommand{\E}{\mathsf{E}}
\renewcommand{\S}{S}
\def\G{G}
\newcommand{\qproof}{\ket{\Q_{\sf proof}}}
\newcommand{\qproofprime}{|\Q_{\sf proof}'\rangle}
\newcommand{\spacespan}{\text{span} }
\def\test{\mathsf{test}}
\def\prove{\mathsf{prove}}
\def\core{\mathsf{core}}
\newcommand{\sustech}{Department of Physics, Southern University of Science and Technology, Shenzhen 518055, China}
\renewcommand{\eqref}[1]{Eq. (\ref{#1})}
\begin{document}

\title{Verification of Group Non-membership by Shallow Quantum Circuits}

\author{Kai Sun}
\thanks{K.S. and Z.-J.Z. contributed equally to this work.}
\affiliation{CAS Key Laboratory of Quantum Information, University of Science and Technology of China, Hefei 230026, China}
\affiliation{CAS Center For Excellence in Quantum Information and Quantum Physics, University of Science and Technology of China, Hefei 230026, China}

\author{Zi-Jian Zhang}
\thanks{K.S. and Z.-J.Z. contributed equally to this work.}
\affiliation{\sustech}

\author{Fei Meng}
\affiliation{\sustech}
\affiliation{Department of Computer Science, The University of Hong Kong, Pokfulam, Hong Kong SAR, China}

\author{Bin Cheng}
\affiliation{\sustech}
\affiliation{Centre for Quantum Software and Information, Faculty of Engineering and Information Technology, University of Technology Sydney, NSW 2007, Australia}

\author{Zhu Cao}
\affiliation{Key Laboratory of Advanced Control and Optimization for Chemical Processes of Ministry of Education, East China University of Science and Technology, Shanghai 200237, China}

\author{Jin-Shi Xu}
\email{jsxu@ustc.edu.cn}
\affiliation{CAS Key Laboratory of Quantum Information, University of Science and Technology of China, Hefei 230026, China}
\affiliation{CAS Center For Excellence in Quantum Information and Quantum Physics, University of Science and Technology of China, Hefei 230026, China}

\author{Man-Hong Yung}
\email{yung@sustech.edu.cn}
\affiliation{\sustech}
\affiliation{Shenzhen Institute for Quantum Science and Engineering, Southern University of Science and Technology, Shenzhen 518055, China}
\affiliation{Guangdong Provincial Key Laboratory of Quantum Science and Engineering, Southern University of Science and Technology, Shenzhen 518055, China}
\affiliation{Shenzhen Key Laboratory of Quantum Science and Engineering, Southern University of Science and Technology, Shenzhen 518055, China}

\author{Chuan-Feng Li}
\email{cfli@ustc.edu.cn}
\affiliation{CAS Key Laboratory of Quantum Information, University of Science and Technology of China, Hefei 230026, China}
\affiliation{CAS Center For Excellence in Quantum Information and Quantum Physics, University of Science and Technology of China, Hefei 230026, China}

\author{Guang-Can Guo}
\affiliation{CAS Key Laboratory of Quantum Information, University of Science and Technology of China, Hefei 230026, China}
\affiliation{CAS Center For Excellence in Quantum Information and Quantum Physics, University of Science and Technology of China, Hefei 230026, China}

\begin{abstract}
Decision problems are the problems whose answer is either YES or NO. As the quantum analogue of $\mathsf{NP}$ (nondeterministic polynomial time), the class $\mathsf{QMA}$ (quantum Merlin-Arthur) contains the decision problems whose YES instance can be verified efficiently with a quantum computer. The problem of deciding the group non-membership (GNM) of a group element is known to be in $\mathsf{QMA}$. Previous works on the verification of GNM required a quantum circuit with $O(n^5)$ group oracle calls. Here we propose an efficient way to verify GNM problems, reducing the circuit depth to $O(1)$ and the number of qubits by half. We further experimentally demonstrate the scheme, in which two-element subgroups in a four-element group are employed for the verification task. A significant completeness-soundness gap is observed in the experiment. 
\end{abstract}

\maketitle

{\bf Introduction---}
Quantum effect can be used to enhance information processing in many ways. Besides speeding up solving certain problems~\cite{shor1994algorithms,10.1145/780542.780552,nielsen2002quantum}, quantum computers can also be used to construct novel interactive proof systems (IPS)~\cite{goldwasser1989knowledge,watrous2003pspace},
which leads to fruitful studies in blind quantum computing~\cite{broadbent2009universal,fitzsimons2017unconditionally,barz2013experimental}, quantum zero-knowledge proof systems~\cite{broadbent2016zero,grilo2019perfect} and multiprover interactive proof systems ~\cite{ji2020mip,natarajan2019neexp}, etc.
An IPS involves a \emph{verifier} and (potentially multiple) \emph{provers}, where the verifier aims at solving certain problems by exchanging messages with the provers.

IPS can be used to classify decision problems, the problems whose answers can only be YES or NO.
For example, nondeterministic polynomial time ($\mathsf{NP}$), one of the most important complexity classes, can be described by an IPS, with a classical verifier and a single computationally unbounded prover exchanging one round of classical message~\cite{sipser1996introduction,kitaev2002classical}.
Specifically, $\mathsf{NP}$ contains decision problems that, for a YES instance, there exists certain proof message, with which the YES instance can be verified in polynomial time by a classical computer.
$\mathsf{NP}$ can be generalized to the quantum realm naturally and the quantum analogue is called quantum Merlin-Arthur ($\sf QMA$)~\cite{kitaev2002classical, watrous2008quantum}.
In $\mathsf{QMA}$, the proof message is replaced by a quantum state and the verifier can use a quantum computer to process it.

Since a classical verifier can be simulated by a quantum computer and a classical message can be described by a quantum state, every problem belongs to $\mathsf{NP}$ is also in $\mathsf{QMA}$, i.e., $\sf NP \subseteq QMA$.
However, it remains an unsolved problem whether $\sf QMA$ is strictly larger than $\sf NP$ and the group non-membership (GNM) problem is believed to be a possible candidate that falls in $\sf QMA$ but not in $\sf NP$~\cite{group85,watrous2008quantum,babai1992bounded,Watrous2000QuantumProof}. Previous works have shown potential quantum advantage on verifying YES instances of this problem.
It has been proven that the GNM problem is not in $\sf NP^B$~\cite{babai1992bounded} for a certain group oracle $\B$. Also, Watrous proved $\sf GNM(B)\in QMA^B$ for every $\B$ by giving quantum proofs and a verification process which can be efficiently done by a quantum computer~\cite{Watrous2000QuantumProof}.
Furthermore, Watrous conjectured that certain quantum proofs, which is similar to the one constructed for proving $\sf GNM(B)\in QMA^B$, can be used in many other decision problems of finite groups, such as the problems of deciding proper subgroups and simple groups~\cite{Watrous2000QuantumProof}.

Because of the potential applications of quantum IPS and the growing power of near-term quantum devices~\cite{preskill2018quantum,arute2019quantum}, it has become a meaningful question that how to make quantum IPS more friendly for near-term quantum devices.
The verification of the GNM problem is of special importance as it is closely related to the verification of a wide spectrum of group properties and is expected to have quantum advantage.
However, Watrous's process is not favorable for near-term devices as it requires too deep quantum circuits~\cite{Watrous2000QuantumProof,Babai91}.

In this work, we proposed a new verification process which is more friendly to near-term quantum devices based on Watrous's protocol.
The depth of quantum circuit is reduced to $O(1)$ for the groups with at most $2^n$ elements, whereas previous work required $O(n^5)$ oracle calls in one circuit. The number of qubits needed is also half reduced. Our new process makes it easier to use the verification of GNM as a part of near-term quantum applications such as quantum cryptography protocols.
We also demonstrate our new process by an all-optical setup. Various photonic quantum proofs are sent to the optical systems and a significant completeness-soundness gap is presented, showing the validity of our process.

{\bf Group non-membership problem---} First, we formally give the definition of the group non-membership problem here~\cite{Watrous2000QuantumProof}. Let $\G$ be a finite group and ${\S}=\left\langle {{g_1}, \dots ,{g_k}} \right\rangle$ be a subgroup generated by group elements ${{g_1}, \dots ,{g_k}}\in {\G}$.
Given an element $x\in {\G}$, the group non-membership problem is to decide whether $x$ is outside the subgroup $\S$. If $x\notin \S$, $x$ is a YES instance; otherwise, $x$ is a NO instance.

To analyze the problem with minimum assumption on the group, usually the framework of \textit{black-box groups}~\cite{babai1984complexity} is adopted.
In this work, we adopt the same framework as in Watrous's work~\cite{Watrous2000QuantumProof} for the quantum group oracle, in which the quantum group element labels are a set of mutually orthogonal quantum states. We denote the quantum label corresponding to the group element $g$ by $\ket{\psi_g}$ and we denote the space spanned by the quantum labels of elements in $\G$ by $\spacespan\{G\}:=\spacespan\{\ket{\psi_{g_1}}\bra{\psi_{g_2}}:g_1,g_2\in\G\}$. The quantum group oracle is defined to be able to detect whether a state is in $\spacespan\{G\}$ and carry out right multiplication $\mathcal{M}(\cdot)$ as $\mathcal{M}(g_2)\ket{\psi_{g_1}}=\ket{\psi_{g_1 g_2}}$.

In Watrous's process~\cite{Watrous2000QuantumProof}, the quantum proof for the non-membership can be a uniform superposition of the elements in a coset ${\alpha\S}$ of the subgroup $\S$ for any $\alpha \in \G$, where ${\alpha\S}$ is defined as $\alpha \S := \{\alpha s|s\in \S\}$. Explicitly, it can be written as,
\begin{equation}
\label{equ-proofstate}
\qproof = \frac{1}{\sqrt {|\S|}}\sum_{g \in \alpha\S} \ket{\psi_g},
\end{equation}
where $|\S|$ is the element number of the subgroup $\S$. This state is invariant under right multiplications of the elements in $\S$ because they map the elements in $\alpha \S$ bijectively to $\alpha \S$.
On the other hand, if $x\notin\S$, the result state is orthogonal to the original one as $\bra{{\sf Q_{proof}}}\mathcal{M}(x) \left| {{\sf Q_{proof}}} \right\rangle  = 0$ since $(\alpha \S) x$ and $\alpha \S$ do not share common elements.

Next, we introduce the {\it core} quantum circuit that plays a central role in the verification process. The core circuit is similar to the swap test circuit and is depicted in ~\autoref{fig-corecircuit}.
The outcome of the core circuit is defined to be the measurement outcome of the control qubit. We denote by $\core(x,\qproof)=s$ the event of obtaining the measurement outcome $s\in\{0,1\}$ in one run of the core circuit with input state $\qproof$ and group member $x$. The outcome can show the effect of the multiplication by $x$ on the input state.
\begin{figure}[t]
	\centering
	\includegraphics[width=2.9 in]{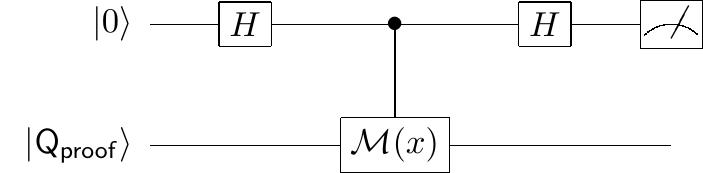}
	\caption{Core circuit. The circuit is similar to the swap test circuit and aims to check whether the input state is invariant under certain group multiplication. With a correct proof state, if $x \in S$, the measurement outcome is always 0; if $x \notin S$, the measurement outcome is 1 with probability 0.5.}
	\label{fig-corecircuit}
\end{figure}
For $\core(x,\qproof)$, if $x \in {\S}$, the outcome can only be $0$ as $\qproof$ is invariant under the multiplication. If $x \notin {\S}$, the probability of obtaining $1$ is $0.5$ as the state after multiplication is orthogonal to $\qproof$. Therefore, with the proof state, the non-membership of an element can be verified when the outcome $1$ is obtained.

However, a malicious prover may send bogus proof states that deviate from \autoref{equ-proofstate} and give incorrect outcomes. Therefore, to ensure the soundness of the verification, the verifier has to do a property checking on the received proof state, i.e., check the state is invariant under the group multiplication $\mathcal{M}(s)$ for any $s\in \S$, so that the elements in $\S$ cannot be proven to be outside $\S$. In the original process~\cite{Watrous2000QuantumProof}, to verify the proof received is valid, the verifier needs to uniformly sample the subgroup elements in a reversible way and produce a quantum superposition of the quantum labels
\begin{equation}
\label{equ-quantumsuperpostion}
\sum_{g\in S} a_i \ket{g}\ket{\text{garbage}(g)},
\end{equation}
where $\{a_i\}$ should be nearly uniform.
However, the reversible sampling requires $O(n^5)$ calls of the group oracle in the quantum circuit according to Remark 8.3 in \cite{Babai91} and requires the verifier to keep at least one more quantum group element label. 

\textbf{Simplified verification---}
In this work, we reduce both the circuit depth and qubit number needed for the verification of GNM.
The simplification relies on the technique which we call \textit{Random State Inspection} (RSI). RSI can be used in verification processes that includes a property checking phase of the proof state. Usually, in these processes the property checking and the after verification (verification after property ensured) are done in one quantum circuit. Watrous's process is such a process. In Watrous's process, the verifier first ensures that the received proof state is invariant under multiplication with subgroup members by multiplying it with the state in \autoref{equ-quantumsuperpostion} and then carries out the verification with the core circuit. RSI provides a way to reduce the circuit depth by separating the property checking and the verification after property checking.

In RSI, the prover is required to send $m$ registers that carry copies of a state to the verifier. The verifier randomly selects one register to reserve and apply independent test channels to the other $m-1$ registers to check the property of the states that they carry. If all the $m-1$ registers pass the property checking, the verifier accepts the reserved register for the later verification process. Otherwise, the verifier rejects. 
We prove that, if all the other registers have passed the test channel, the probability for the reserved register to fail passing the test channel if tested can be bounded to $0$ at speed $O(1/m)$ even when the $m$ registers are entangled. The verifier can then directly apply the after verification on the reserved register as its property is ensured. By RSI, the circuit depth needed in the verification is reduced to that of the property checking or the after verification.

More important, we simplify the property checking process for the proof state in GNM. Rather than using the state in \autoref{equ-quantumsuperpostion} which needs $O(n^5)$ quantum group operations to produce, we propose a test channel in which a subgroup element $s$ is first classically sampled from a nearly uniform distribution by Babai's algorithm~\cite{Babai91}, followed by checking whether $\core(s,\rho)=0$. Here, `nearly uniform' means the probability for $s$ to be any subgroup elements is in $(1/|S| - 1/2^{2n}, 1/|S| + 1/2^{2n})$.
We denote the probability for a state $\rho$ to pass the test channel by $\Pr(\core(s,\rho)=0)$. We prove that for any element $g\in \S$ and any quantum state $\rho\in\spacespan\{G\}$, the probability of incorrectly proving the non-membership of $g \in S$, i.e. having $\core(g,\rho)=1$, can be bounded as,
\begin{equation}
    \Pr(\core(g,\rho)=1)\leq 4 \left( 1-\Pr(\core(s,\rho)=0) \right) \ .
\end{equation}
By RSI, we can ensure $\Pr(\core(s,\rho)=0)$ is high enough and therefore bound the error probability $\Pr(\core(g,\rho)=1)$.

To summarize, in our new process, we split the property checking of the proof state and the verification after property checking into different circuits by RSI. We also use a new property checking process which requires much less quantum resources.
As a result, the verifier only needs to run the core circuits, which is shallow, for many times, rather than run a deep circuit with $O(n^5)$ group operations.
Also, the number of qubits that the verifier needs to keep is halved because the verifier no longer needs the keep the state in \autoref{equ-quantumsuperpostion}. 
Detailed and rigorous analysis can be found in the supplementary material.

\begin{figure*}[t]
\centering
\includegraphics[clip=true, width=\linewidth]{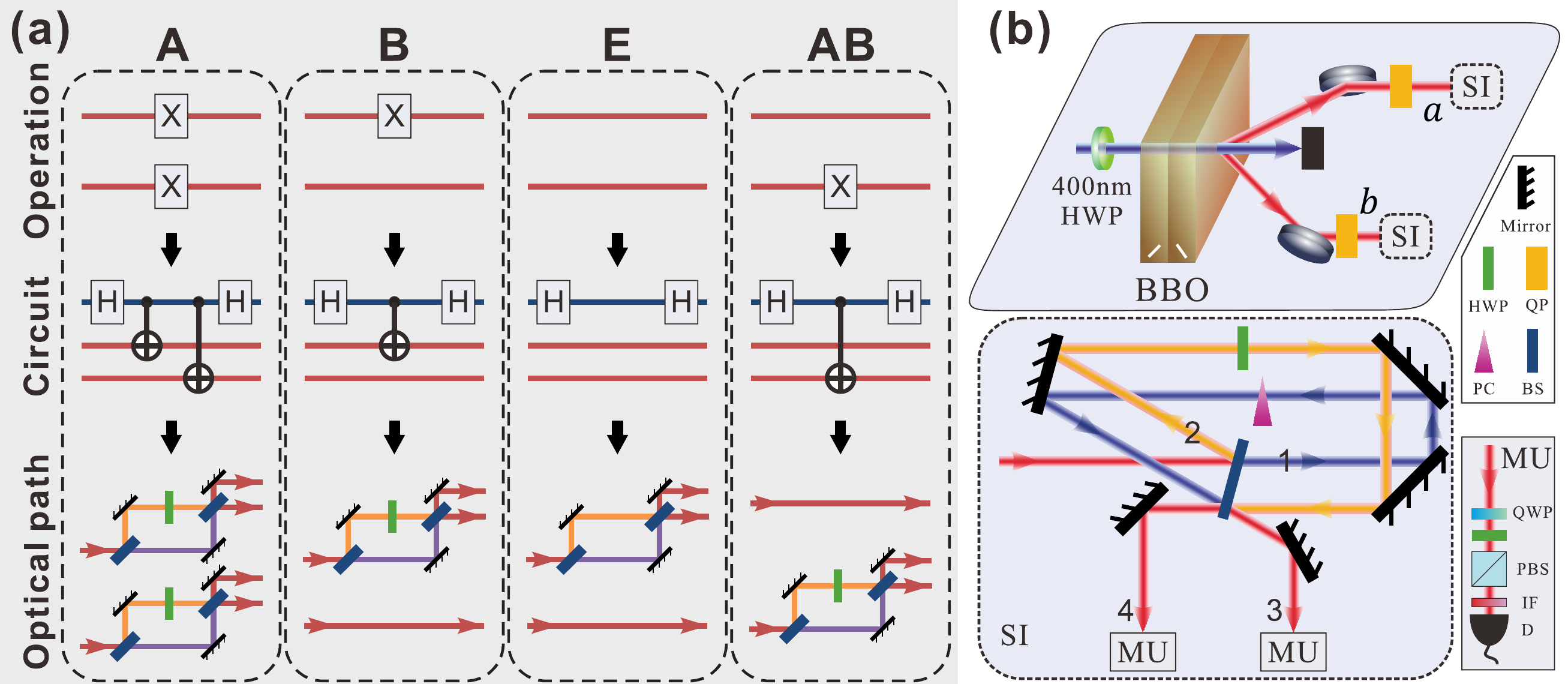} %
\caption{(Color online). Circuit mapping and experimental setup. {\bf (a)} The circuits for group multiplications in the first line are deduced from the quantum labels for the elements and can be easily proven. They are modified to its controlled version and used to construct the core circuits for the verification process in the second line. Optical paths are presented in the third line. Here, two beam splitters (BSs) building a Mach-Zehnder interferometer (MZI) are used to play the role as two Hadamard operations on the control qubit which is realized with the path information. One path is regarded as $\ket{0}$ and the other one is $\ket{1}$. A half wave plate (HWP) is placed in $\ket{1}$ path to act the CNOT gate on the polarization qubit with the optical axis at $45^\circ$. {\bf (b)} Experimental setups. Entangled photon pairs are produced by pumping BBO and using quartz plates (QP) on the above panel. Two photons are sent to the sides $a$ and $b$ respectively. On each side, a Sagnac interferometer (SI) shown on the bottom panel in detail is constructed to realize the MZI. In a SI, a HWP is placed in $\ket{1}$ path (shown in orange beam and marked as 2) and a phase compensation (PC) crystal is located in $\ket{0}$ (shown in blue beam and marked as 1). Measurement unit (MU) consisting of a quarter wave plate (QWP), a HWP, a polarization beam splitter (PBS) and a single photon detector equipped with an interferometer filter (IF) is placed on each output port (marked as 3 and 4) of SI. Note, in this figure, unitary of multiplying by $\A$ is realized. With removing the SI, we can implement different quantum circuits.			}\label{fig:setup}
\end{figure*}

{\bf Experimental setup---}In this work, an experimental demonstration of the our new verification process is carried out. We consider an abelian group $\sf G= \{\langle A,B\rangle|AB=BA, A^2=B^2=E\}.$
The four elements in $\sf G$ are encoded in the polarization degree of freedom of photons as $ \ket{\psi_\mathsf{E}} = \ket{VH},  \ket{\psi_\mathsf{A}} = \ket{HV},  \ket{\psi_\mathsf{B}} = \ket{HH}$ and $\ket{\psi_\mathsf{AB}})= \ket{VV}$, in which $|H\rangle$ and $|V \rangle$ denote the horizontal and vertical polarization, respectively. They can together span the whole two-qubit Hilbert space.
The optical realization of the controlled right multiplication of $\{\mathsf{A},\ \mathsf{B},\ \mathsf{AB},\ \mathsf{E}\}$ are illustrated in \autoref{fig:setup}\textbf{(a)}. The subgroups we choose are $\sf S = \left\{ \sf E, A\right\}$ and $\mathsf{S'} = \left\{ \sf E, AB\right\}$.
The quantum proof states for $\mathsf{S}$ and $\mathsf{S'}$ used in the experiment are
$
\label{Q_proof_exp}
\qproof = \frac{1}{\sqrt{2}}(\ket{\psi_\mathsf{B}}+\ket{\psi_\mathsf{AB}}) =\frac{1}{\sqrt{2}}(\ket{HH}+\ket{VV}),
$
and
$
\label{Q_proof_exp}
\qproofprime = \frac{1}{\sqrt{2}}(\ket{\psi_\mathsf{B}}+\ket{\psi_\mathsf{A}}) =\frac{1}{\sqrt{2}}(\ket{HH}+\ket{HV}),
$
respectively.

In the experimental, we put $\qproof, \qproofprime, \ket{\psi_\mathsf{A}}$ and $\ket{\psi_\mathsf{B}}$ in the core circuit with right multiplication by $\sf E,A,B$ and $\sf AB$. The full experimental setup is shown in \autoref{fig:setup}\textbf{(b)}. The input states are generated by pumping two identically cut type-I beta-barium-borate (BBO) crystals whose optic axes are aligned in mutually perpendicular planes~\cite{Kwiat99} with an ultraviolet (UV) source. The UV pulses is frequency doubled from a mode-locked Ti:sapphire laser centered at 800 nm with 130 fs pulse width and 76 MHz repetition rate. After compensating the birefringence effect between $H$ and $V$ in BBO crystals with quartz plates (QP), maximally entangled photon pairs of the forms $\qproof=(| HH \rangle + | VV \rangle)/\sqrt{2}$ are produced~\cite{PhysRevA.60.R773}.
Furthermore, by adjusting the polarization of pump pulses and down-conversion photons, the other states of $\ket{\psi_\mathsf{B}}=|HH\rangle$, $\ket{\psi_\mathsf{A}}=|HV\rangle$ and $\qproofprime=(| HH \rangle + | HV \rangle)/\sqrt{2}$ are produced.
The input photons are then sent to one of the quantum circuits in \autoref{fig:setup} to perform the core circuit with different group multiplications.
In our setup, the Mach-Zehnder interferometer is realized by Sagnac interferometer in which the path information of photons is regarded as the control qubit~\cite{sun2018}. In a Sagnac interferometer, an optical non-polarization beam splitter (BS), worked as the Hadamard gates on control qubit, is used to separate the beam into two paths $1$ and $2$ which are treated as the control qubit $|0\rangle$ and $|1\rangle$ respectively. Here, the BS is chosen to split $50:50$ for $0^\circ$ angle of incidence which could be decrease the difference of split ratio of different polarizations. In the path $|1\rangle$, a half-wave plate (HWP) is used to implement CNOT gates with set at $45^\circ$ to reverse the photon polarization. The visibilities of two Sagnac interferometers are $96.7\pm0.4\%$ and $95.9\pm0.4\%$, respectively. Note that, for the circuit $\sf E$ with $\sf E$ multiplication, there is no CNOT gate and the HWP is set at $0^\circ$. Beams $1$ and $2$ combine in the BS and then are separated as beams $3$ and $4$. The polarization of photons are analyzed on the outputs of beams $3$ and $4$ by polarization beam splitters (PBS), HWP and quarter-wave plates (QWP). The photons are detected by single photon detectors ($D$) equipped with 3 nm interference filters (IF).

For the circuit $\mathsf{A}$, which implements the multiplication by $\mathsf{A}$, the probability $P_0$ of detecting $\ket{0}$ equals the sum of coincidence count (CC) of detectors located at $a3$ and $b3$ and CC of detectors located at $a4$ and $b4$, where $a3$ is the output port 3 of the SI on the side of $a$, and similarly hereinafter. 
The probability $P_1$ of detecting $\ket{1}$ equals the sum of CC of $a3$ and $b4$ and CC of $a4$ and $b3$.
On the other hand, for the case of circuit $\mathsf{B}$, according to the corresponding mapping relation where the Sagnac interferometer is only placed in the $a$ side, the probability of detecting $\ket{0}$ equals the CC of $a3$ and $b$, and the probability of $\ket{1}$ equals the CC of $a4$ and $b$. The similar methods suit the other circuits $\mathsf{AB}$ and $\mathsf{E}$.

Besides the interference visibility introduced above, two Sagnac interferometers are further verified with the input state of $(\ket{HH}+\ket{VV})/\sqrt{2}$ which is prepared with a fidelity of $95.9\pm1.0\%$. For the Sagnac interferometer appearing in the $\E$ circuit, the output state generated from the CC of $a3$ and $b$ remains the maximally entangled state and is achieved experimentally with a fidelity of $95.3\pm1.0\%$. For the other interferometer which is used in circuit $\mathsf{AB}$, without inserting the CNOT gate, the output state generated from the CC of $a$ and $b3$ is also the same with the input state and achieves a fidelity of $94.2\pm1.4\%$. We further verify other output cases of the interferometers and achieve high fidelities for them. The real and imaginary parts of all corresponding density matrices are presented in the supplementary material.

{\bf Experimental results---}
\begin{figure}[t]
	\centering
	\includegraphics[width= 3.4in]{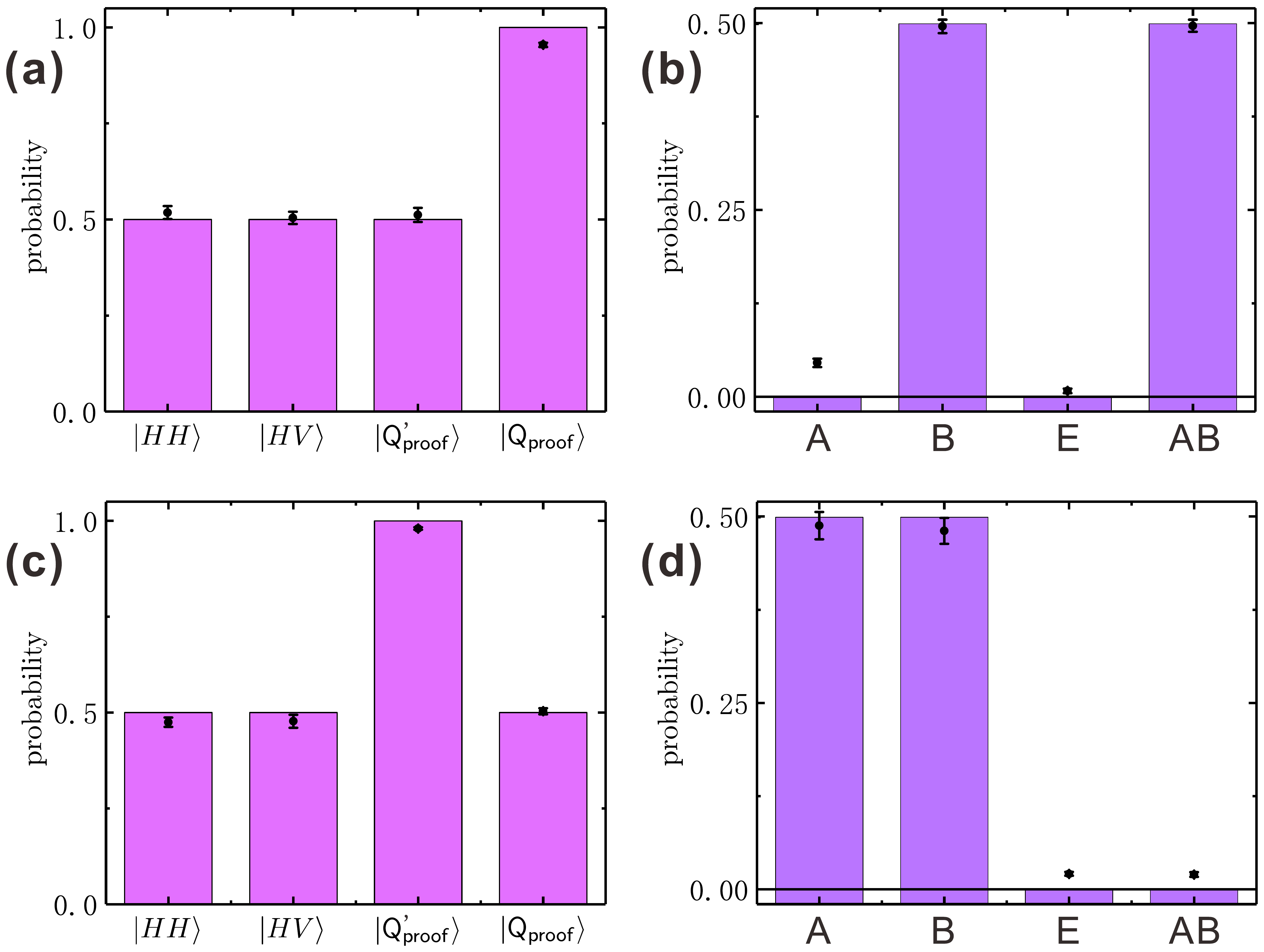}
	\caption{(Color online). 
	Experimental results. {\bf (a)}, {\bf (c)} The detecting probabilities of $\ket{0}$ with different input proof states for the circuit $\sf A$ and $\sf AB$, respectively. {\bf (b)}, {\bf (d)} The probability for the proof states $\qproof=\ket{HH}+\ket{VV})/\sqrt{2}$ and $\qproofprime=(\ket{HH}+\ket{HV})/\sqrt{2}$ to prove group non-membership for every group elements (detecting the control qubit in $\ket{1}$). The histograms and black points are theoretical and experimental results, respectively. All error bars are estimated to be standard deviation from the statistical variation of the photon counts assumed to follow a Poisson distribution}
		\label{fig:ExpRes}
\end{figure}
Equipped with the experiment setup, we first carried out our new process on the group $\sf S$.
To demonstrate the test channel in our verification process, the correct proof $\qproof$ as well as the bogus proofs $\qproofprime, \ket{\psi_\mathsf{A}}$ and $\ket{\psi_\mathsf{B}}$ are produced and sent to the core circuit with multiplication by $\A$. A state passes the test if the control qubit is detected to be in $\ket{0}$.
The results are shown in \autoref{fig:ExpRes}{\bf(a)}. We find that the probabilities for a bogus proof to pass the state test do not exceed $0.518\pm0.017$ and have a significant gap towards the probability $0.955\pm0.006$ for a correct proof state $\qproof$ to pass.
Then we show how the non-membership of an element $g$ can be verified with the correct proof state $\qproof$. The group non-membership of $g$ is verified when $\ket{1}$ is detected in the core circuit with multiplication by $g$. 
The experimental result is shown in \autoref{fig:ExpRes}{\bf(b)}.
We find the probabilities for $\qproof$ to be accepted is higher than $0.496\pm0.009$ for $\sf B,AB\notin \mathsf {S'}$ and lower than $0.045\pm0.006$ for  $\sf E,A\in \mathsf {S'}$.
The above analysis implies that if the prover sends $m$ registers and the verifier chooses $m-1$ registers to test, the probability $p_c$ for a group of correct proof state to be accepted is greater than $0.496 (0.949)^{m-1}$. In contrast, the probability for $m$ bogus state to pass the tests is only $(0.518)^{m-1}$ .
For a general bogus proof, our theory shows that the probability $p_s$ for it to be accepted is bounded by $\frac{16}{7(m-1)}$. Therefore, the gap $p_c-p_s$ is maximized when $m=14$ and the maximal value is $0.075$.
	
For the other subgroup $\mathsf{S'}$, the result is similar. The bogus proofs become $\ket{\psi_\mathsf{A}}$, $\ket{\psi_\mathsf{B}}$, $\qproof$ and the correct proof becomes $\qproofprime$. The probabilities for the bogus proofs to pass the test channel do not exceed  $0.503\pm0.008$ and for the correct proof $\qproofprime$, the corresponding probability is $0.980\pm0.003$ as shown in \autoref{fig:ExpRes}{\bf(c)}. We find that the probability for $\qproofprime$ to be accepted when used for verifying GNM is higher than $0.481\pm0.017$ for $\sf B,A\notin \mathsf {S'}$ and is lower than $0.020\pm0.003$ for $\sf E,AB\in \mathsf {S'}$ as shown in \autoref{fig:ExpRes}{\bf(d)}. In this case, $p_s$ is still bounded by $\frac{16}{7(m-1)}$ and $p_c=0.481(0.980)^{m-1}$. The gap $p_c-p_s$ is maximized when $m=19$ and the maximal value is $0.207$. These completeness-soundness gaps indicates the success of our experiment.

{\bf Conclusion---} In this work, a new quantum verification process for the GNM problem is proposed, in which the required quantum resources are greatly reduced. We experimentally demonstrate the verification scheme in an all-optical setup. Our novel verification process can be used to construct more quantum protocols for near-term quantum devices. Furthermore, as it is very likely that similar verification process of GNM can be used in other problems of finite groups, it will be interesting if this validity was formally proven and experimentally demonstrated.

\subsection*{Acknowledgments}
This work was supported by the National Key Research and Development Program of China (Grants NO. 2016YFA0302700 and 2017YFA0304100), National Natural Science Foundation of China (Grant NO. 11821404, 11774335, 61725504, 61805227, 61975195, U19A2075), Anhui Initiative in Quantum Information Technologies (Grant NO.\ AHY060300 and AHY020100), Key Research Program of Frontier Science, CAS (Grant NO.\ QYZDYSSW-SLH003), Science Foundation of the CAS (NO. ZDRW-XH-2019-1), the Fundamental Research Funds for the Central Universities (Grant NO. WK2030380017, WK2030380015 and WK2470000026). 
Natural Science Foundation of Guangdong Province (Grant NO.2017B030308003), the Key R\&D Program of Guangdong province (Grant NO. 2018B030326001), the Science, Technology and Innovation Commission of Shenzhen Municipality (Grant NO. JCYJ20170412152620376 and NO. JCYJ20170817105046702 and NO. KYTDPT20181011104202253), National Natural Science Foundation of China (Grant NO. 11875160 and NO. U1801661), the Economy, Trade and Information Commission of Shenzhen Municipality (Grant NO.201901161512), Guangdong Provincial Key Laboratory (Grant NO. 2019B121203002).

Z.-J.Z. proposed and proved the theory; K.S. conducted the experiments.

	\bibliographystyle{apsrev4-1}
	\bibliography{main}
	\clearpage
    \onecolumngrid
	\begin{appendix}
		\section*{Supplementary Material}
		For conciseness, in the supplementary material we use $\ket{g}$ rather than $\ket{\psi_g}$ to denote the quantum label of group element $g$.
		First we formally give the definition of Random State Inspection (RSI) and the test channel used in our new process here.
\begin{algorithm}\caption{Random State Inspection\label{alg-general-test}}
\begin{enumerate}
	\item Prover sends $m$ registers to the verifier. The registers should carry the same quantum state.
	\item Verifier uniformly randomly reserves one register and applies a test channel $T$ to the other $m-1$ registers. The test channel $T$ should map each register to a one-qubit state, with $\ket{0}$ indicating \textit{pass} and $\ket{1}$ indicating \textit{fail}.
	\item Verifier measures the outputs of the test channels in the computational basis. The reserved register is accepted if and only if all outcomes are 0.
\end{enumerate}
\end{algorithm}

\begin{algorithm}\caption{Proof State Test Channel $T$}
\begin{enumerate}
	\label{alg-proof-test}
	\item Verifier uses the group oracle to check whether the state of the register is in $\spacespan\{G\}$. If it is the case, output $\ket{1}$.
	\item Verifier randomly samples a subgroup element by Babai's algorithm~\citep{Babai91} by a classical computer, with the probability of sampling each element in $(\frac{1}{|S|}-\frac{1}{2^{2n}},\frac{1}{|S|}+\frac{1}{2^{2n}})$. Denote the element sampled by $s$.
	\item Verifier applies the core circuit to the register with element $s$ by the group oracle.
	\item Verifier gives the control qubit in the core circuit as the output.
\end{enumerate}
\end{algorithm}
		
With \autoref{alg-general-test} and \autoref{alg-proof-test}, we summarize our new verification process here.
		\begin{algorithm}\caption{Verification of Group Non-membership}
			\begin{enumerate}
				\label{alg-qma_veri}
				\item Prover sends $m$ registers with state in \autoref{equ-proofstate} to the verifier, trying to prove that the element $g$ in the group $G$  is not in the subgroup $S$ of $G$. There are at most $2^n$ elements in $G$.
				\item (The $\test$ subroutine) Verifier runs \autoref{alg-general-test} with test channel $T$ defined in \autoref{alg-proof-test} and obtains the reserved register.
				\item (The $\prove$ subroutine) Verifier applies the core circuit with $g$ to the reserved register after checking the state of the reserved register is in the space $\spacespan\{G\}=\spacespan\{\ket{g_1}\bra{g_2}:g_1,g_2\in\G\}$ (the space spanned by the valid labels) by the group oracle. The prover passes this subroutine if the outcome of the core circuit is $1$ and the state of the register is found in the valid space $\spacespan\{G\}$.
				\item The prover passes the verification process if he passes both the $\test$ and the $\prove$ subroutine.
			\end{enumerate}
		\end{algorithm}
		
		Note that RSI can achieve its goal even when the registers are mixed and entangled. Therefore, a quantum proof $\rho_s$ can be seen as carried by the registers. To prove the validity of \autoref{alg-qma_veri}, we need to prove its completeness and soundness. We first prove the completeness of it.
		\begin{theorem}[Completeness]
			\label{thm-completeness}
			In \autoref{alg-qma_veri}, if the prover and the verifier are honest, for any group element $g$ not in the subgroup, the probability for the prover to prove the non-membership of $g$ is $\frac{1}{2}$.
		\end{theorem}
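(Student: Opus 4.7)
My plan is to verify directly, step by step, that both subroutines behave as claimed when the prover submits $m$ unentangled copies of the honest state $\qproof$ defined in \autoref{equ-proofstate}.

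First I would handle the $\test$ subroutine. Since every register carries the pure state $\qproof$, the registers on which the verifier applies the test channel $T$ are mutually independent, so I can argue one register at a time. For a single register, the first step of $T$ asks whether the state lies in $\spacespan\{G\}$; by definition $\qproof$ is a uniform superposition over quantum labels in the coset $\alpha\S$, so it belongs to $\spacespan\{G\}$ and this step is passed with certainty. Then $T$ samples some $s\in\S$ (the particular distribution is irrelevant for completeness) and runs the core circuit on $\qproof$ with multiplication by $s$. The key identity here is $\mathcal{M}(s)\qproof = \qproof$ for any $s\in\S$, which holds because right multiplication by $s$ permutes the coset $\alpha\S$ bijectively. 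The core circuit of \autoref{fig-corecircuit} is a swap-test-style interferometer whose control-qubit outcome $1$ has probability $\tfrac{1}{2}(1-\operatorname{Re}\langle Q_{\sf proof}|\mathcal{M}(s)|Q_{\sf proof}\rangle)$, which here evaluates to $0$. Hence each of the $m-1$ test channels outputs $\ket 0$ deterministically, and the verifier accepts the reserved register with probability $1$. Moreover, because the test channels act only on the other registers, the reserved register is left in the state $\qproof$.

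Next I would analyze the $\prove$ subroutine conditioned on acceptance in the $\test$ step. The reserved register still carries $\qproof \in \spacespan\{G\}$, so the membership check in the prove step is again passed with certainty. The verifier then runs the core circuit with multiplication by $g$. Using the observation already recorded in the main text that $(\alpha\S)g$ and $\alpha\S$ are disjoint cosets when $g\notin\S$, one has
\begin{equation}
\langle Q_{\sf proof}|\mathcal{M}(g)|Q_{\sf proof}\rangle = 0.
\end{equation}
Plugging this into the swap-test output formula gives that the control qubit is measured as $1$ with probability exactly $\tfrac{1}{2}$, which is the event that the prove subroutine succeeds.

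Finally I would multiply: the overall acceptance probability is the probability of passing $\test$ times the conditional probability of passing $\prove$, namely $1\cdot \tfrac{1}{2} = \tfrac{1}{2}$, as claimed. The only subtlety worth spelling out carefully—and the one place I would not be tempted to cut corners—is the justification that the test channels on the other $m-1$ registers leave the reserved register in the intended pure state; this is immediate from the product structure of the honest proof but is the place where, for a dishonest or entangled proof, the argument would break, so in the completeness proof I want to make the independence assumption and its consequence explicit before computing the two probabilities.
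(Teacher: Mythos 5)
Your proposal is correct and follows essentially the same route as the paper's own proof: the honest state passes every test channel with certainty because it is invariant under right multiplication by subgroup elements, and the prove step succeeds with probability exactly $\tfrac{1}{2}$ because $(\alpha\S)g$ and $\alpha\S$ are disjoint, making the post-multiplication state orthogonal to the original (your interferometer overlap formula is just the paper's explicit norm computation in closed form). Your added remark that the product structure of the honest proof guarantees the reserved register remains in the intended pure state is a point the paper leaves implicit, but it is the same argument.
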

		\begin{proof}
			In the $\test$ subroutine, the honest prover will always pass the test because for a honest proof state $\qproof$ $\Pr(\core(s,\qproof))=1$. Then, in the $\prove$ subroutine, the copy used in this phase has a probability of $\frac{1}{2}$ to be accepted. This can be proved by observing how the proof state is transformed by the core circuit. The transformation is
			\begin{equation}
			\begin{split}
		&\frac{1}{\sqrt {|\S|}}\sum_{s \in \alpha\S} \ket{s} \rightarrow \frac{1}{2\sqrt {|\S|}} \left(\ket{0} \sum_{s \in \alpha\S} (\ket{s}+\ket{s g}) +
			\ket{1} \sum_{s \in \alpha\S} (\ket{s}-\ket{s g})\right),
			\end{split}
			\end{equation}
			where the control qubit is put in the left. The probability to obtain $\ket{1}$ after measuring the control qubit is
			\begin{equation}
			    p=\left\Vert \frac{1}{2\sqrt {|\S|}}\sum_{s \in \alpha\S} (\ket{s}-\ket{s g})\right\Vert^2.
			\end{equation}
			Because when $g\notin S$, we have $sg\notin \alpha S$ and because for any two $s_1,s_2\in \alpha S, s_1\neq s_2$, we have $s_1g\neq s_2g$, the norm can be easily calculated and we can obatin $p=\frac{1}{2}$. Therefore, the probability to prove the non-membership of an element $g\notin S$ is $\frac{1}{2}$ by \autoref{alg-qma_veri}. 
			
			All the verification processes can be done in polynomial time with a quantum computer. According to \autoref{thm-soundness}, only polynomial number of registers are required. The process of every register is only the core circuit with a random group element. The sampling of random group element can also be efficiently done according to \autoref{thm-babai-elementsampler}.
		\end{proof}
		
		Below, we denote $\rho_r$ as the density matrix of the reserved register. Also, we denote $(\test(\rho_s) = 1)$ as the event that the prover passes this subroutine and $(\test(\rho_s) = 0)$ otherwise. In addition, we denote $(T(\rho_r) = 1)$ as $\rho_r$ passes the test channel (not the $\test$ subroutine). Similarly, we denote $(\prove(g,\rho_s) = 1)$ as the event that the proof state passes the verification process, and $(\prove(g,\rho_s) = 0)$ otherwise.

		\begin{theorem}[Soundness]
			\label{thm-soundness}
			For any $g\in S$, the probability to incorrectly verify its non-membership using \autoref{alg-qma_veri} vanishes as,
			\begin{equation}
			\Pr(\prove(g,\rho_s)= 1) \leq \frac{8}{m} \ ,
			\end{equation}
			for any proof state $\rho_s$.
		\end{theorem}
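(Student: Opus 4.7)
My plan is to express $\Pr(\prove(g,\rho_s)=1)$ as a trace against a joint POVM on the $m$ registers, and then combine two ingredients: the operator-valued analog of the scalar bound $\Pr(\core(g,\rho)=1) \le 4(1-\Pr(\core(s,\rho)=0))$ already established in the main text, and a short counting argument on the $m$-register test statistics.

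First I would fix any joint proof state $\rho^{(m)}$ on the $m$ registers. Let $P_0^{(j)}, P_1^{(j)}$ be the POVM elements on register $j$ for the test channel $T$ passing, respectively failing; and let $C_g^{(i)}$ be the POVM element for ``the $\core$ circuit with element $g$ accepts on register $i$'', pre-composed with the $\spacespan\{G\}$ projector required by step 3 of the $\prove$ subroutine. Since these measurements act on disjoint registers they commute, so conditioning on $i$ being the reserved register one has
\begin{equation*}
\Pr\bigl[\test \text{ passes} \wedge \prove \text{ accepts} \bigm| i\bigr] = \mathrm{Tr}\!\Bigl[\bigl(\bigotimes_{j\neq i} P_0^{(j)}\bigr) \otimes C_g^{(i)}\, \rho^{(m)}\Bigr],
\end{equation*}
and averaging over the uniform choice of $i\in[m]$ gives $\Pr(\prove(g,\rho_s)=1)$.

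Next I would upgrade the scalar bound from the main text into an operator inequality $C_g \preceq 8\, P_1$ on the single-register subspace $\spacespan\{G\}$: the scalar inequality supplies the factor $4$, and an additional factor $2$ absorbs the slack incurred by Babai's sampler being only nearly uniform (the per-element deviation of $1/2^{2n}$). Substituting this operator bound into the trace above and identifying $P_1^{(i)}$ with ``register $i$ would fail the test if $T$ were applied'' yields
\begin{equation*}
\Pr(\prove(g,\rho_s)=1) \;\le\; \frac{8}{m}\sum_{i=1}^{m} \Pr\bigl[X_j=0\ \forall j\neq i,\ X_i=1\bigr],
\end{equation*}
where the $X_j\in\{0,1\}$ are the outcomes of the hypothetical joint experiment of applying $T$ simultaneously to all $m$ registers, well-defined because the $T_j$ on distinct registers mutually commute.

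A short counting step then closes the proof: for different $i$ the events $\{X_j=0\ \forall j\neq i,\ X_i=1\}$ are pairwise disjoint and their union is the event $\{N=1\}$ with $N=\sum_j X_j$, so the sum above equals $\Pr[N=1]\le 1$ and the bound $\Pr(\prove(g,\rho_s)=1)\le 8/m$ follows. I expect the main obstacle to be the middle step: promoting the scalar inequality to an operator inequality that remains valid when the $m$ registers are entangled, and carefully tracking how the non-uniformity of Babai's sampler propagates into the constant. The $\spacespan\{G\}$ projection in step 3 of $\prove$ is what makes this work, since the single-register inequality only holds on $\spacespan\{G\}$; without it a dishonest prover could hide amplitude in the invalid-label subspace to escape the bound.
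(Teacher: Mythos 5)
Your proof is correct, and it reaches the $8/m$ bound by a route that differs in packaging from the paper's, though both arguments rest on the same analytic core. The paper splits the work into two lemmas: a conditional-probability bound $\Pr(T(\rho_r)=1\,|\,\test(\rho_s)=1)\geq 1-(\tfrac{1}{\Pr(\test(\rho_s)=1)}-1)\tfrac{1}{m-1}$, proved by expanding the joint outcome distribution of applying $T$ to all registers, and the single-register scalar bound $\Pr(\core(g,\rho)=1)\leq\frac{1-\Pr(T(\rho)=1)}{K(1-|S|/2^{2n})}$; it then multiplies the two conditionals so that $\Pr(\test(\rho_s)=1)$ cancels. You keep the second ingredient (your scalar bound with constant $4$ is exactly that proposition after inserting $1/K\leq 2$ and $1-|S|/2^{2n}\geq 1/2$) but replace the first with the observation that the events ``exactly register $i$ fails'' are pairwise disjoint, so $\sum_i\Pr[X_j=0\ \forall j\neq i,\ X_i=1]=\Pr[N=1]\leq 1$; this is the same combinatorial fact the paper extracts as $\frac{1}{m}\sum_i a_{i=1}\leq\frac{1}{m}$, but your version avoids conditioning entirely and makes the robustness to entangled registers manifest. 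The one step you flag as delicate — lifting the scalar bound to $C_g\preceq 8P_1$ — does go through: a trace inequality valid for every state supported on $\spacespan\{G\}$ is equivalent to the compressed operator inequality $\Pi C_g\Pi\preceq 4\,\Pi P_1\Pi$, and the compressions can be dropped because both the $\prove$ subroutine and step 1 of the test channel measure $\{\Pi,I-\Pi\}$ first, making $C_g$ vanish off $\spacespan\{G\}$ and $P_1$ block-diagonal so that $\Pi P_1\Pi\preceq P_1$. Two small remarks: the validity check inside $T$ itself (not only the one in $\prove$) is what you need for $\Pi P_1\Pi\preceq P_1$, so credit both; and your extra factor of $2$ for the Babai sampler is unnecessary, since the constant $4$ already absorbs the $(1-|S|/2^{2n})\geq\frac12$ slack — using $C_g\preceq 4P_1$ would give the stronger bound $4/m$, of which the stated $8/m$ is a harmless weakening.
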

		\begin{proof}
		    In this proof we assume that the quantum proof sent by the prover is always in the valid space $\spacespan\{G\}=\spacespan\{\ket{g_1}\bra{g_2}:g_1,g_2\in\G\}$ because in \autoref{alg-qma_veri} the verifier always check whether the states is in $\spacespan\{G\}$ before doing any operation on them. After the checking, the quantum proof is either projected to $\spacespan\{G\}$ or rejected. Therefore, the prover making the quantum proof not in $\spacespan\{G\}$ will only decrease the probability for it to be accepted.
		
			To prove the non-membership of $g$, the registers should first pass the $\test$ subroutine. By \autoref{thm-veripassandtestpass}, the probability $\Pr(\test(\rho_r)=1)$ for the reserved copy $\rho_r$ to pass the test is bounded as
			\begin{equation}
			\Pr(T(\rho_r)=1|\test(\rho_s)=1)\geq 1-(\frac{1}{\Pr(\test(\rho_s) = 1)}-1)\frac{1}{m-1} \ .
			\end{equation}
			By \autoref{thm-pass-soundness}, we know that the the high probability to pass the test means low error probability to prove the non-membership. For the reserved copy, provided that the $\test$ subroutine has been passed, the probability of incorrectly proving GNM by it can be given by,
			\begin{equation}
			\Pr(\core(g,\rho_r) = 1 | \test(\rho_s) = 1) \leq \frac{1-\Pr(T(\rho_r)=1|\test(\rho_s)=1)}{K(1-\frac{|S|}{2^{2n}})} \ ,
			\end{equation}
			where
			\begin{equation}
			\frac{1}{K}=1-\cos(\left\lceil \frac{|g|}{2} \right\rceil \frac{2}{|g|}\pi)\leq 2 \ .
			\end{equation}
			Combining the equations, we can obtain the probability of incorrectly accepting the proof state
			\begin{equation}
			\label{equ-soundness-last}
			\begin{split}
			&\Pr(\test(\rho_s) = 1, \core (g,\rho_r) = 1 )\\
			=& \Pr(\test(\rho_s) = 1) \cdot \Pr(\core (g,\rho_r) = 1 | \test(\rho_s) = 1) \\
			\leq&  \frac{\Pr(\test(\rho_s) = 1)}{K(1-\frac{|S|}{2^{2n}})} (1-\Pr(T(\rho_r)=1 | \test(\rho_s) = 1)) \\
			\leq&  \frac{\Pr(\test(\rho_s) = 1)}{K(1-\frac{|S|}{2^{2n}})} (\frac{1}{\Pr(\test(\rho_s) = 1)}-1)\frac{1}{m-1} \\
			\leq & \frac{1}{K}(1+\frac{1}{2^{n+1}-1})\frac{1}{m-1} \\
			\leq & \frac{8}{m} \ .
			\end{split}
			\end{equation}
			
		\end{proof}
		
		\begin{corollary}
		For the group $\sf G= \{\langle \A,\B\rangle|\A\B=\B\A, \A^2=\B^2=\E\}$ used in the experimental demonstration in the main text, the probability to incorrectly verify the non-membership of a subgroup element $g$ except $\E$ using \autoref{alg-qma_veri} vanishes as,
			\begin{equation}
			\Pr(\prove(g,\rho_s)= 1) \leq (\frac{16}{7})\frac{1}{m-1} \ ,
			\end{equation}
			for any proof state $\rho_s$.
        \end{corollary}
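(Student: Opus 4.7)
The plan is to specialize the general soundness bound established in \autoref{thm-soundness} to the concrete four-element group used in the experiment. Inspecting the proof of \autoref{thm-soundness}, the key inequality just before the final step is
\begin{equation*}
\Pr(\prove(g,\rho_s)=1) \leq \frac{1}{K}\Bigl(1+\frac{1}{2^{n+1}-1}\Bigr)\frac{1}{m-1},
\end{equation*}
with $1/K = 1-\cos\!\bigl(\lceil |g|/2\rceil\,\tfrac{2}{|g|}\pi\bigr)$. So the task reduces to plugging in the two group-dependent parameters, namely $n$ (the bit-size of the group) and $|g|$ (the order of the element whose non-membership is being tested).

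First I would determine $n$. The group $\mathsf{G} = \{\langle \mathsf{A},\mathsf{B}\rangle \mid \mathsf{AB}=\mathsf{BA},\ \mathsf{A}^2=\mathsf{B}^2=\mathsf{E}\}$ has exactly four elements, so $|G|=4=2^2$ and we can take $n=2$, giving $\tfrac{1}{2^{n+1}-1}=\tfrac{1}{7}$. Next I would determine $|g|$ for the allowed $g$. Since every non-identity element of $\mathsf{G}$ squares to $\mathsf{E}$, any $g\in\{\mathsf{A},\mathsf{B},\mathsf{AB}\}$ satisfies $|g|=2$. Substituting this into the expression for $1/K$ gives $\lceil 1\rceil\cdot\tfrac{2}{2}\pi = \pi$, so $1/K = 1-\cos\pi = 2$. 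Combining,
\begin{equation*}
\Pr(\prove(g,\rho_s)=1) \leq 2\cdot\Bigl(1+\tfrac{1}{7}\Bigr)\cdot\frac{1}{m-1} = \frac{16}{7(m-1)},
\end{equation*}
which is exactly the stated bound.

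The only subtlety, and arguably the main obstacle, is understanding why $\mathsf{E}$ must be excluded from the statement: the formula for $1/K$ in \autoref{thm-soundness} presumes $|g|\geq 2$, since if $g=\mathsf{E}$ then $|g|=1$ and the factor $\cos\!\bigl(\lceil 1/2\rceil\cdot 2\pi\bigr) = \cos(2\pi) = 1$ would collapse the bound to zero, reflecting the fact that the core circuit can never output $1$ on the identity element anyway (and the statement of incorrect non-membership verification is trivially vacuous). Excluding $\mathsf{E}$ means the one remaining case is $|g|=2$, which is what makes the computation of $1/K$ collapse to the clean constant $2$. Beyond this observation the corollary is a direct arithmetic specialization, with no further estimation required.
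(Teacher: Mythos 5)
Your proposal is correct and matches the paper's own proof: both specialize the intermediate inequality $\frac{1}{K}\bigl(1+\frac{1}{2^{n+1}-1}\bigr)\frac{1}{m-1}$ from the soundness theorem by noting that every non-identity element of this group has order $2$ (so $1/K=2$) and that $n=2$, yielding $\frac{16}{7(m-1)}$. Your added remark on why $\E$ is excluded is consistent with how the paper handles $g=e$ separately in \autoref{thm-pass-soundness}.
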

		
		\begin{proof}
		The rank of the group elements in $\sf G= \{\langle \A,\B\rangle|\A\B=\B\A, \A^2=\B^2=\E\}$ are all $2$ except $\E$. Therefore $1/K=2$. Notice that there are $4$ elements in $\sf G$ and $n=2$, refer to \autoref{equ-soundness-last}, we have
		\begin{equation}
		    \Pr(\test(\rho_s) = 1, \core (g,\rho_r) = 1 )\leq \frac{1}{K}(1+\frac{1}{2^{n+1}-1})\frac{1}{m-1} = 2(1+\frac{1}{2^{2+1}-1})\frac{1}{m-1} = (\frac{16}{7})\frac{1}{m-1}.
		\end{equation}
		\end{proof}
		
		\begin{proposition}
			\label{thm-veripassandtestpass}
			In \autoref{alg-general-test}, the probability $\Pr(T(\rho_r)=1|\test(\rho_s)=1)$ for the reserved register $\rho_r$ to pass the test in the condition that all other registers have passed the tests can be bounded as
			\begin{equation}
			\Pr(T(\rho_r)=1|\test(\rho_s)=1)\geq 1-(\frac{1}{\Pr(\test(\rho_s)=1)}-1)\frac{1}{m-1},
			\end{equation}
			where $\Pr(\test(\rho_s)=1)$ is the overall probability for the registers to pass \autoref{alg-general-test}.
		\end{proposition}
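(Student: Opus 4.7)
My plan is to symmetrize the protocol so that all $m$ registers are treated on equal footing, and then reduce the claimed bound to a trivial inequality about a probability distribution on $\{0,1\}^m$. The key preparatory observation I would invoke is that the test channel $T$ acts on one register only, and channels on disjoint registers commute; in particular, the marginal outcome distribution on the $m-1$ non-reserved registers is unaffected by whether or not $T$ is also applied to the reserved one. Consequently, even though $\rho_s$ may be entangled across the $m$ registers, there is a well-defined joint distribution of outcomes $(X_1,\dots,X_m)\in\{0,1\}^m$ (with $X_j=1$ meaning \emph{pass}) obtained by the hypothetical application of $T$ to all $m$ registers. The real protocol is equivalent to this extended procedure with the reserved outcome ignored, and $T(\rho_r)=1$ is interpreted as the event $X_i=1$ for the uniformly chosen reserved index $i$.

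Next I would introduce $p_k:=\Pr\bigl[|\{j:X_j=1\}|=k\bigr]$ and compute the two relevant probabilities by averaging over $i$. The event $\{\test(\rho_s)=1\}$ requires every non-reserved register to pass, i.e.\ either all $m$ registers pass or only the reserved register fails. A short counting over the $m$ choices of $i$ then yields
\begin{equation*}
\Pr[\test(\rho_s)=1]=p_m+\frac{p_{m-1}}{m},\qquad \Pr[T(\rho_r)=1,\ \test(\rho_s)=1]=p_m,
\end{equation*}
the second identity holding because ``reserved passes and all others pass'' is simply ``all $m$ pass'', which is independent of $i$.

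Finally, I would rearrange the claimed bound into the equivalent form
\begin{equation*}
\Pr[\test(\rho_s)=1]-\Pr[T(\rho_r)=1,\ \test(\rho_s)=1]\ \leq\ \frac{1-\Pr[\test(\rho_s)=1]}{m-1},
\end{equation*}
and substitute the two expressions above; after cancelling common terms the claim collapses to $p_{m-1}+p_m\leq 1$, which is immediate since these are probabilities of disjoint events. The only conceptually subtle step is the symmetrization in the first paragraph, where possible entanglement between registers makes it nontrivial to speak of a joint outcome distribution; once that is justified by the commutativity of local channels on disjoint subsystems, the remainder is a short algebraic manipulation and I expect no further obstacles.
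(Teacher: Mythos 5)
Your proposal is correct and follows essentially the same route as the paper's proof: the paper likewise applies $T^{\otimes m}$ to the joint state, reads off the diagonal outcome distribution (its $a_{0^m}$ and $a_{i=1}$ are your $p_m$ and the summands of $p_{m-1}$), obtains $\Pr(\test(\rho_s)=1)=a_{0^m}+\frac{1}{m}\sum_i a_{i=1}$ and $\Pr(T(\rho_r)=1,\test(\rho_s)=1)=a_{0^m}$, and closes with the same normalization bound $\sum_i a_{i=1}\leq 1-a_{0^m}$, i.e.\ your $p_{m-1}+p_m\leq 1$. Your explicit justification of the symmetrization via commutativity of local channels is a point the paper leaves implicit, but it does not change the argument.
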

		\begin{proof} After applying $T$ to every register in $\rho_s$, we can obtain an $m$-qubit state as
			\begin{equation}
			T^{\otimes M} (\rho_s)=\sum_{s_1,s_2\in\{0,1\}^m} a_{s_1,s_2} \ket{s_1}\bra{s_2} \ .
			\end{equation}
			Then we try to bound the probability for passing RSI. For the sake of conciseness, we denote $a_{s,s}$ by $a_s$. Also, we denote the $a_s$ in which $s$ is $0$ in every indices except in index $i$ as $a_{i=1}$. We have
			\begin{equation}
			\begin{split}
	\Pr(\test(\rho_s)=1)&=\frac{1}{m}\sum_{i=1}^m (a_{0^m}+a_{i=1})=a_{0^m}+\frac{1}{m}\sum_{i=1}^m a_{i=1}\\
			&\leq  a_{0^m}+\frac{1}{m}(1-a_{0^m}),
			\end{split}
			\end{equation}
			and
			\begin{equation}
			a_{0^m}\geq \frac{m\Pr(\test(\rho_s)=1)-1}{m-1}.
			\end{equation}
			Decompose the density matrix $\rho_r$ of the reserved register after being applied with channel $T$ as
			\begin{equation}
			\rho_r=\sum_{s_1,s_2\in\{0,1\}^2} r_{s_1,s_2} \ket{s_1}\bra{s_2}.
			\end{equation}
			According to the definition of the conditional probability $\Pr(A|B) = \Pr(A\cap B)/\Pr(B)$, we have
			
			\begin{equation}
			\begin{split}
			&\Pr(T(\rho_r)=1|\test(\rho_s)=1)=\frac{a_{0^m}}{\Pr(\test(\rho_s)=1)}\\
			\geq & 1-(\frac{1}{\Pr(\test(\rho_s)=1)}-1)\frac{1}{m-1}.
			\end{split}
			\end{equation}
			
		\end{proof}

        \begin{corollary}
        After \autoref{alg-general-test}, if the reserved register is tested, the probability $\Pr(T(\rho_r)=0, \test(\rho_s)=1)$ for the reserved register to fail passing the test channel together with that the RSI is not rejected is less than $O(1/m)$.
        \end{corollary}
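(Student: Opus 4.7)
The plan is to derive this corollary directly from \autoref{thm-veripassandtestpass}, since the statement is essentially a repackaging of that proposition's bound into joint-probability form, with the nuisance factor $1/\Pr(\test(\rho_s)=1)$ cleared out.

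First I would apply \autoref{thm-veripassandtestpass} to write
\begin{equation*}
\Pr(T(\rho_r)=0 \mid \test(\rho_s)=1) \;\leq\; \left(\frac{1}{\Pr(\test(\rho_s)=1)}-1\right)\frac{1}{m-1}.
\end{equation*}
Next I would multiply through by $\Pr(\test(\rho_s)=1)$ using the product rule $\Pr(A,B)=\Pr(A\mid B)\Pr(B)$, which yields
\begin{equation*}
\Pr(T(\rho_r)=0,\,\test(\rho_s)=1) \;\leq\; \bigl(1-\Pr(\test(\rho_s)=1)\bigr)\frac{1}{m-1}.
\end{equation*}
Finally, since the prefactor $1-\Pr(\test(\rho_s)=1)$ lies in $[0,1]$, this is bounded by $\tfrac{1}{m-1}=O(1/m)$, giving the stated conclusion.

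There is essentially no obstacle here — the content of the corollary is already carried by \autoref{thm-veripassandtestpass}; the only thing worth noting is that the clean scaling in $m$ emerges precisely because the factor $1/\Pr(\test(\rho_s)=1)$ that blows up when the test is very unlikely to pass is cancelled by the extra $\Pr(\test(\rho_s)=1)$ picked up when converting from conditional to joint probability. One might also remark that the bound holds uniformly in the prover's strategy $\rho_s$, since \autoref{thm-veripassandtestpass} itself makes no assumption on $\rho_s$, so no additional hypothesis on the proof state is required.
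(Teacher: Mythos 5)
Your proof is correct and takes essentially the same route as the paper: both derive the corollary from \autoref{thm-veripassandtestpass} by multiplying the conditional bound by $\Pr(\test(\rho_s)=1)$ to obtain the joint probability, with the blow-up factor $1/\Pr(\test(\rho_s)=1)$ cancelling exactly as you note. Your version is marginally cleaner in that you pass to the complement $\Pr(T(\rho_r)=0\mid\test(\rho_s)=1)$ first, whereas the paper multiplies first and then splits the marginal $\Pr(\test(\rho_s)=1)$ into the two joint events, but the content is identical.
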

		
		\begin{proof}
		By \autoref{thm-veripassandtestpass} we know that
		\begin{equation}
			\Pr(T(\rho_r)=1|\test(\rho_s)=1)\geq 1-(\frac{1}{\Pr(\test(\rho_s)=1)}-1)\frac{1}{m-1}.
			\end{equation}
			Easily we can know that
			\begin{equation}
			\Pr(T(\rho_r)=1|\test(\rho_s)=1)\Pr(\test(\rho_s)=1)\geq \Pr(\test(\rho_s)=1)-(1-\Pr(\test(\rho_s)=1))\frac{1}{m-1},
			\end{equation}
			\begin{equation}
			\Pr(T(\rho_r)=1,\test(\rho_s)=1)\geq \Pr(\test(\rho_s)=1)-(1-\Pr(\test(\rho_s)=1))\frac{1}{m-1},
			\end{equation}
			\begin{equation}
			\Pr(T(\rho_r)=1,\test(\rho_s)=1)\geq \Pr(\test(\rho_s)=1)-O(1/m),
			\end{equation}
			\begin{equation}
			\Pr(T(\rho_r)=1,\test(\rho_s)=1)\geq\Pr(T(\rho_r)=1,\test(\rho_s)=1)+\Pr(T(\rho_r)=0,\test(\rho_s)=1)-O(1/m),
			\end{equation}
			\begin{equation}
			\Pr(T(\rho_r)=0,\test(\rho_s)=1)\leq O(1/m).
			\end{equation}
		\end{proof}

		\begin{restatable}{proposition}{thm-pass-soundness}
			\label{thm-pass-soundness}
			For any state $\rho_r\in\spacespan\{\ket{g_1}\bra{g_2}:g_1,g_2\in\G\}$, the probability $\Pr(T(\rho_r)=1)$ for it to pass the test in \autoref{alg-proof-test} and the probability $\Pr(\core(g,\rho_r)=1)$ for a group element $g\in S$ to be proven not in $S$ by $\rho_r$, has the relation that
			\begin{equation}
			\label{equ-prove-pass}
			\Pr(\core(g,\rho_r)=1)\leq \frac{1-\Pr(T(\rho_r)=1)}{K(1-\frac{|S|}{2^{2n}})},
			\end{equation}
			where
			\begin{equation}
			\frac{1}{K}=1-\cos(\lceil\frac{|g|}{2}\rceil\frac{2}{|g|}\pi).
			\end{equation}
		\end{restatable}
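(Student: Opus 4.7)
The plan is to pass between three quantities: the error probability $\Pr(\core(g,\rho_r)=1)$, the fidelity $F:=\mathrm{tr}(\rho_r P_S)$ of $\rho_r$ with the $S$-invariant subspace (where $P_S:=\tfrac{1}{|S|}\sum_{s\in S}\mathcal{M}(s)$ is the group-averaging projector), and the test-pass probability $\Pr(T(\rho_r)=1)$. The underlying identity for the core circuit is $\Pr(\core(x,\rho)=1)=\tfrac{1}{2}\bigl(1-\mathrm{Re}\,\mathrm{tr}(\rho\,\mathcal{M}(x))\bigr)$, obtained by a direct Hadamard---controlled-$\mathcal{M}(x)$---Hadamard computation on the control qubit.

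First I would spectrally decompose $\mathcal{M}(g)=\sum_{k=0}^{|g|-1}\omega_k\,\Pi_k^{(g)}$ with $\omega_k=e^{2\pi i k/|g|}$. Writing $q_k:=\mathrm{tr}(\rho_r\Pi_k^{(g)})$, the identity above gives
\[
\Pr(\core(g,\rho_r)=1)=\tfrac{1}{2}\sum_{k=1}^{|g|-1}\bigl(1-\cos(2\pi k/|g|)\bigr)q_k\le\frac{1-q_0}{2K},
\]
since by definition $1/K=\max_{k\ne 0}(1-\cos(2\pi k/|g|))$ and $\sum_{k\ne 0}q_k=1-q_0$. Because $g\in S$ implies $\langle g\rangle\le S$, the $S$-invariant subspace sits inside the eigenvalue-$1$ subspace of $\mathcal{M}(g)$, so $\Pi_0^{(g)}-P_S\succeq 0$ and hence $q_0\ge F$. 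Therefore $\Pr(\core(g,\rho_r)=1)\le(1-F)/(2K)$.

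Next I would convert $F$ into a statement about $\Pr(T(\rho_r)=1)$. Set $\epsilon_s:=p(s)-1/|S|$ and $E:=\sum_s\epsilon_s\mathcal{M}(s)$, so that $\sum_s p(s)\mathcal{M}(s)=P_S+E$. Expanding the test probability as a weighted sum over $s$ of $\tfrac{1}{2}(1+\mathrm{Re}\,\mathrm{tr}(\rho_r\mathcal{M}(s)))$ yields $1-\Pr(T(\rho_r)=1)=\tfrac{1}{2}\bigl(1-F-\mathrm{Re}\,\mathrm{tr}(\rho_r E)\bigr)$. The crucial observation---and in my view the main obstacle---is that $E$ acts trivially on the $S$-invariant subspace: since $\mathcal{M}(s)P_S=P_S\mathcal{M}(s)=P_S$ for every $s\in S$ and $\sum_s\epsilon_s=0$, one gets $EP_S=P_SE=0$, hence $E=(I-P_S)E(I-P_S)$. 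Combined with the trace bound $|\mathrm{tr}(AB)|\le\|A\|_1\|B\|_\infty$ this yields $|\mathrm{tr}(\rho_r E)|=|\mathrm{tr}((I-P_S)\rho_r(I-P_S)E)|\le(1-F)\|E\|_\infty\le(1-F)\,|S|/2^{2n}$, using Babai's sampler guarantee $|\epsilon_s|<1/2^{2n}$.

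Rearranging gives $2(1-\Pr(T(\rho_r)=1))\ge(1-F)(1-|S|/2^{2n})$, i.e.\ $(1-F)\le 2(1-\Pr(T(\rho_r)=1))/(1-|S|/2^{2n})$, and plugging this into $\Pr(\core(g,\rho_r)=1)\le(1-F)/(2K)$ delivers the claimed bound. Skipping the $EP_S=0$ cancellation and using only $\|E\|_\infty\le|S|/2^{2n}$ directly on $\rho_r$ would contribute an error term independent of $F$, breaking the inequality precisely in the $F\to 1$ regime where the proof state is nearly honest; so extracting that cancellation from the structure of $P_S$ is really the heart of the proposition.
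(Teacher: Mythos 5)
Your argument is correct, but it reaches the bound by a genuinely different and considerably shorter route than the paper's. The paper works at the level of the amplitudes $\beta_\alpha$ of a pure state: it reduces to pure states by convexity, decomposes $G$ into cosets of $S$ and then into $\langle g\rangle$-cycles, and extracts the constant $K$ from an explicit Lagrange-multiplier optimization over real cyclic sequences (Proposition \ref{thm-omax-1}), which entails solving a cubic characteristic equation and treating even/odd cycle lengths and a degenerate double root separately. You obtain the same constant in one line as the spectral gap of the shift unitary: since $\mathcal{M}(g)^{|g|}=I$ on $\spacespan\{G\}$, its eigenvalues are $|g|$-th roots of unity and $\max_{k\neq 0}\bigl(1-\cos(2\pi k/|g|)\bigr)=1-\cos\bigl(\lceil|g|/2\rceil\tfrac{2\pi}{|g|}\bigr)=1/K$, which is exactly what the paper's optimization computes implicitly (its constraints are quadratic forms in the cyclic shift). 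Your projector ordering $\Pi_0^{(g)}\succeq P_S$, valid precisely because $g\in S$, replaces the paper's coset decomposition and Cauchy--Schwarz step, and since you work with traces against $\rho_r$ throughout, the convexity reduction to pure states becomes unnecessary. On the Babai sampling error the two proofs diverge again: the paper uses termwise nonnegativity of $\Pr(\core(s,\rho_r)=1)$ together with $p(s)\geq 1/|S|-1/2^{2n}$, which is a bit more elementary than your $P_SE=EP_S=0$ cancellation plus the H\"older bound, but both produce the same factor $1-|S|/2^{2n}$; your observation that the cancellation (or some substitute for it) is essential to avoid an additive error surviving as $F\to 1$ is accurate. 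The only loose end is the degenerate case $g=e$, where $1/K=0$; dispose of it separately as the paper does, which is immediate since $\mathcal{M}(e)=I$ forces $\Pr(\core(e,\rho_r)=1)=0$.
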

		\begin{proof}
			First, if $g=e$, the verifier can immediately reject the proof since $e$ is contained in every group. In this case $\Pr(\prove(g,\rho_s)=1)=0$ and the inequality holds. In this following proof, we assume $g\neq e$ and therefore $|g|\neq 1$.\\
			
			To prove \autoref{equ-prove-pass}, we just need to prove it for any pure state $\rho_r=\ket{\psi}\bra{\psi}$, because if this theorem is true for any pure state, then for any mix state
			$
			\rho=\sum_i q_i \ket{i}\bra{i},
			$
			\begin{equation}
			\begin{split}
			\Pr(\core(g,\rho_r)=1)& =\sum_i q_i \Pr(\core(g,\ket{i})=1)\\
			&\leq \sum_i q_i \frac{1-\Pr(T(\ket{i})=1)}{K(1-\frac{|S|}{2^{2n}})}\\
			&=\frac{1-\Pr(T(\rho_r)=1)}{K(1-\frac{|S|}{2^{2n}})}.
			\end{split}
			\end{equation}
			Therefore, we start to prove this theorem for pure state here. We can extend any pure state $\ket{\psi}$ as
			\begin{equation}
			\ket{\psi}=\sum_{\alpha\in G} \beta_\alpha \ket{\alpha}
			\end{equation}
			For simplicity, in this proof we denote $\Pr(\core(g,\ket{\psi})=1)$ by $p(g)$. We want to prove
			\begin{equation}
			p(g)\leq \frac{1-\Pr(T(\ket{\psi})=1)}{K(1-\frac{|S|}{2^{2n}})}.
			\end{equation}
			In \autoref{alg-qma_veri}, the state is tested as
			\begin{equation}
			\begin{split}
		&\sum_{\alpha\in G} \beta_\alpha \ket{\alpha}\rightarrow \frac{1}{2} \ket{0} \sum_{\alpha\in G}  \beta_\alpha (\ket{\alpha}+\ket{\alpha g}) +
			\frac{1}{2} \ket{1} \sum_{\alpha\in G}  \beta_\alpha (\ket{\alpha}-\ket{\alpha g}).
			\end{split}
			\end{equation}
			The probability for a element $g\in G$ to be verified by the state is
			\begin{equation}
			\begin{split}
			p(g)&= \frac{1}{4} \left\Vert \sum_{\alpha\in G}  \beta_\alpha (\ket{\alpha}-\ket{\alpha g})\right\Vert^2\\
			&= \frac{1}{4}  \sum_{\alpha\in G}( |\beta_\alpha|^2 + |\beta_{\alpha g^{-1}}|^2 -2Re(\beta_\alpha^* \beta_{\alpha g^{-1}}))\\
			&= \frac{1}{4}  ( 2 - \sum_{\alpha\in G} 2Re(\beta_\alpha^* \beta_{\alpha g^{-1}})),\\
			\end{split}
			\end{equation}
			by which we can obtain
			\begin{equation}
			\label{equ-repp-p}
			\sum_{\alpha\in G} Re(\beta_\alpha^* \beta_{\alpha g^{-1}})=1-2p(g).
			\end{equation}
			Decompose $G$ by the cosets $G/S=\{\alpha S|\alpha\in G\}$, where $\alpha S=\{\alpha s|s\in S\}$. Easily one can find that if $h\in \alpha S$, $hS=\alpha S$. By this decomposition we sum up \autoref{equ-repp-p} and obtain
			\begin{equation}
			\begin{split}
			\sum_{s \in S} (1-2p(s))
			&=\sum_{s\in S}\sum_{\alpha\in G} Re(\beta_\alpha^* \beta_{\alpha s^{-1}})\\
			&=Re(\sum_{\alpha\in G} \beta_\alpha^* \sum_{s\in \alpha S}\beta_{s})\\
			&=Re(\sum_{\alpha S\in G/S}\sum_{h\in \alpha S}\beta_h^*\sum_{s\in h S}\beta_{s})\\
			&=Re(\sum_{\alpha S\in G/S}\sum_{h\in \alpha S}\beta_h^*\sum_{s\in \alpha S}\beta_{s})\\
			&=\sum_{\alpha S\in G/S} |\sum_{h\in \alpha S}\beta_h|^2,
			\end{split}
			\end{equation}
			and therefore
			\begin{equation}
			\sum_{s\in S}p(s)=\frac{1}{2}(|S|-\sum_{\alpha S\in G/S} |\sum_{h\in \alpha S}\beta_h|^2).
			\end{equation}
			
			On the other hand, we want to get the probability $\Pr(T(\ket{\psi})=1)$ for this state to pass the test. Notice that the probability for the state to pass the test by $s$ is
			\begin{equation}
			\Pr (\core(s,\ket{\psi})=0)=1-\Pr (\core(s,\ket{\psi})=1)=1-p(s).
			\end{equation}
			Also, in a test the subgroup element used is randomly sampled by Babai's algorithm. By \autoref{thm-babai-elementsampler} we know that in polynomial time we can sample every subgroup element $s$ with
			\begin{equation}
			\Pr(s\text{ is sampled})\in (\frac{1}{|S|}-\frac{1}{2^{2n}}, \frac{1}{|S|}+\frac{1}{2^{2n}}).
			\end{equation}
			Therefore we have
			\begin{equation}
			\begin{split}
			\Pr(T(\ket{\psi})=1)&=\sum_{s\in S} \Pr(s\text{ is sampled}) (1-p(s))\\
			&=1-\sum_{s\in S} \Pr(s\text{ is sampled}) p(s)\\
			&\leq 1-(\frac{1}{|S|}-\frac{1}{2^{2n}})\frac{1}{2}(|S|-\sum_{\alpha S\in G/S} |\sum_{h\in \alpha S}\beta_h|^2)\\
			&= \frac{1}{2}+\frac{|S|}{2^{2n+1}}+\frac{1}{2}(\frac{1}{|S|}-\frac{1}{2^{2n}})\sum_{\alpha S\in G/S} |\sum_{h\in \alpha S}\beta_h|^2.\\
			\end{split}
			\end{equation}
			To give a relation between $p(g)$ and $\Pr(T(\ket{\psi})=1)$, we want to know the maximum of
			\begin{equation}
			f(\vec{\beta})=\sum_{\alpha S\in G/S} |\sum_{h\in \alpha S}\beta_h|^2
			\end{equation}
			under the condition that
			\begin{equation}
			b(\vec{\beta})=\sum_{\alpha\in G} Re(\beta_\alpha^* \beta_{\alpha g^{-1}})=B=1-2p(g)
			\end{equation}
			and
			\begin{equation}
			l(\vec{\beta})=\sum_{h\in G} |\beta_h|^2=1.
			\end{equation}
			We should survey more carefully the structure of the coset $\alpha S$. Define $cycle(c,g)$ as $c\langle g \rangle = \{cg^i|i\in Z\}$. A coset can be decomposed into disjoint orbits. Let $CYC(\alpha S,g)$ be a set of $c$ such that for any $c_1,c_2\in CYC(\alpha S,g)$, $cycle(c_1,g)\neq cycle(c_2,g)$ if $c_1\neq c_2$; and $\alpha S=\cup_{c\in CYC(\alpha S,g)} cycle(c,g)$.
			By these definition we can further decompose the summation as
			\begin{equation}
			\sum_{\alpha\in G} Re(\beta_\alpha^* \beta_{\alpha g^{-1}})=\sum_{\alpha S\in G/S} \sum_{c\in CYC(\alpha S,g)} \sum_{k=1}^{|g|} Re(\beta_{cg^{-k+1}}^*\beta_{cg^{-k}}).
			\end{equation}
			Define
			\begin{equation}
			f_\alpha(\vec{\beta})=\sum_{c\in CYC(\alpha S,g)} \sum_{k=1}^{|g|} \beta_{cg^{-k}},
			\end{equation}
			\begin{equation}
			b_\alpha(\vec{\beta})=\sum_{c\in CYC(\alpha S,g)} \sum_{k=1}^{|g|} Re(\beta_{cg^{-k+1}}^*\beta_{cg^{-k}}),
			\end{equation}
			\begin{equation}
			l_\alpha(\vec{\beta})=\sum_{c\in CYC(\alpha S,g)} \sum_{k=1}^{|g|} |\beta_{cg^{-k}}|^2.
			\end{equation}
			Therefore
			\begin{equation}
			f(\vec{\beta})=\sum_{\alpha S\in G/S} |\sum_{c\in CYC(\alpha S,g)} \sum_{k=1}^{|g|} \beta_{cg^{-k}} |^2=\sum_{\alpha S\in G/S} |f_\alpha(\vec{\beta})|^2.
			\end{equation}
			We want to know the maximum of $f_\alpha$ when $b_\alpha$ and $l_\alpha$ are fixed. To find out this relationship we first define
			\begin{equation}
			o_{c}(\vec{\beta})=|\sum_{k=1}^{|g|} \beta_{cg^{-k}}|^2,
			\end{equation}
			\begin{equation}
			b_{c}(\vec{\beta})=\sum_{k=1}^{|g|} Re(\beta_{cg^{-k+1}}^*\beta_{cg^{-k}}),
			\end{equation}
			\begin{equation}
			l_{c}(\vec{\beta})=\sum_{k=1}^{|g|} |\beta_{cg^{-k}}|^2.
			\end{equation}
			We need to study the maximum of $o_{c}$ when $g_{c}$ and $l_{c}$ are fixed. Decompose every $\beta_h$ into real part and imaginary part as
			\begin{equation}
			\beta_h=R_h+iI_h.
			\end{equation}
			Therefore
			\begin{equation}
			l_{c}(\vec{\beta})=\sum_{n=1}^{|g|} |\beta_{cg^{-n}}|^2=\sum_{n=1}^{|g|} R_{cg^{-n}}^2 + \sum_{n=1}^{|g|} I_{cg^{-n}}^2=l_{c}^R(\vec{R})+l_{c}^I(\vec{I}),
			\end{equation}
			\begin{equation}
			b_{c}(\vec{\beta})=\sum_{n=1}^{|g|} Re(\beta_{cg^{-n+1}}^*\beta_{cg^{-n}}) = \sum_{n=1}^{|g|} R_{cg^{-n+1}}^*R_{cg^{-n}} + \sum_{n=1}^{|g|} I_{cg^{-n+1}}^*I_{cg^{-n},}=b_{c}^R(\vec{R})+b_{c}^I(\vec{I}),
			\end{equation}
			\begin{equation}
			o_{c}(\vec{\beta})=|\sum_{n=1}^{|g|} \beta_{cg^{-n}}|^2=|\sum_{n=1}^{|g|} R_{cg^{-n}}|^2+|\sum_{n=1}^{|g|} I_{cg^{-n}}|^2=o_{c}^R(\vec{R})+o_{c}^I(\vec{I}).
			\end{equation}
			By \autoref{thm-maximumo} we know that when $b_c^R$ and $l_c^R$ are fixed,
			\begin{equation}
			\max o_{c}^R(\vec{R})=K|g|(b_{c}^R-l_{c}^R)+|g|l_{c}^R=K|g|(b_{c}^R+\frac{1-K}{K}l_{c}^R),
			\end{equation}
			where
			\begin{equation}
			K^{-1}=1-\cos(\lceil\frac{|g|}{2}\rceil\frac{2}{|g|}\pi).
			\end{equation}
			Then we get $\max o_{c}(\vec{\beta})$ under the condition that $b_{c}^R(\vec{R})+b_{c}^I(\vec{I})=b_{c}(\vec{\beta})$ and $l_{c}^R(\vec{R})+l_{c}^I(\vec{I})=l_{c}(\vec{\beta})$. That is
			\begin{equation}
			\begin{split}
			\max o_{c}(\vec{\beta})
			=& \max(o_{c}^R(\vec{R})+o_{c}^I(\vec{I}))\\
			=&\max \left( K|g|(b_{c}^R+\frac{1-K}{K}l_{c}^R)+K|g|(b_{c}^I+\frac{1-K}{K}l_{c}^I)\right)\\
			=&\max \left( K|g|(b_{c}^R+b_{c}^I+\frac{1-K}{K}(l_{c}^R+l_{c}^I))\right)\\
			=&K|g|(b_{c}+\frac{1-K}{K}l_{c}).
			\end{split}
			\end{equation}
			Therefore we have
			\begin{equation}
			\begin{split}
			\max f_\alpha(\vec{\beta})&\leq \sum_{c\in CYC(\alpha S,g)} \sqrt{\max o_{c}}\\
			&=\frac{|S|}{|g|} \sum_{c\in CYC(\alpha S,g)} \frac{|g|}{|S|} \sqrt{\max o_{c}}\\
			&\leq \frac{|S|}{|g|}  \sqrt{ \sum_{c\in CYC(\alpha S,g)} \frac{|g|}{|S|} \max o_{c}}\\
			&= \sqrt{\frac{|S|}{|g|}}  \sqrt{ \sum_{c\in CYC(\alpha S,g)}K|g|(b_{c}+\frac{1-K}{K}l_{c}) }\\
			&= \sqrt{K|S| }  \sqrt{ \sum_{c\in CYC(\alpha S,g)}(b_{c}+\frac{1-K}{K}l_{c}) }\\
			&= \sqrt{K|S|}  \sqrt{ b_{\alpha}+\frac{1-K}{K}l_{\alpha} }.\\
			\end{split}
			\end{equation}
			Then
			\begin{equation}
			\begin{split}
			\max f(\vec{\beta})&=\max \sum_{\alpha S\in G/S} |f_\alpha(\vec{\beta})|^2\\
			&\leq \sum_{\alpha S\in G/S} |\max f_\alpha(\vec{\beta})|^2\\
			&=\sum_{\alpha S\in G/S} K|S| ( b_{\alpha}+\frac{1-K}{K}l_{\alpha}) \\
			&=K|S|(B+\frac{1-K}{K})=|S|(KB+1-K).
			\end{split}
			\end{equation}
			Finally
			\begin{equation}
			\begin{split}
			\Pr(T(\ket{\psi})=1)&\leq \frac{1}{2}+\frac{|S|}{2^{2n+1}}+(\frac{1}{|S|}-\frac{1}{2^{2n}})\frac{1}{2}|S|(KB+1-K)\\
			&=\frac{1}{2}+\frac{|S|}{2^{2n+1}}+(\frac{1}{|S|}-\frac{1}{2^{2n}})\frac{1}{2}|S|(K(1-2p(g))+1-K)\\
			&=\frac{1}{2}+\frac{|S|}{2^{2n+1}}+(\frac{1}{|S|}-\frac{1}{2^{2n}})\frac{|S|}{2}(1-2Kp(g))\\
			&=\frac{1}{2}+\frac{|S|}{2^{2n+1}}+\frac{1}{2}(1-2Kp(g))-\frac{|S|}{2^{2n+1}}(1-2Kp(g))\\
			&=1+(\frac{|S|}{2^{2n}}-1)Kp(g).\\
			\end{split}
			\end{equation}
			\begin{equation}
			\begin{split}
			p(g)&\leq \frac{1-\Pr(T(\ket{\psi})=1)}{K(1-\frac{|S|}{2^{2n}})}.
			\end{split}
			\end{equation}

		\end{proof}
		
		 Here we show how to deduce the claim in the main text by \autoref{thm-pass-soundness}.
		 \begin{proof}
		 Notice that
		$
		    \frac{1}{K}=1-\cos(\lceil\frac{|g|}{2}\rceil\frac{2}{|g|}\pi)\leq 2
		$
		,
		$
		    |S|\leq 2^n
		$
		and
		$
		    (1-\frac{|S|}{2^{2n}})\geq \frac{1}{2},
		$
		by the proposition below we can have
		\begin{equation}
		    	\Pr(\core(g,\rho_r)=1)\leq 4 (1-\Pr(T(\rho_r)=1)).
		\end{equation}
		 \end{proof}

		\begin{proposition}
			\label{thm-omax-1}
			Let $R$ be a vector of  $n$ real numbers $(n>1)$. Under the condition that  $\sum_{i=1}^{n} R_i^2=1$ and
			$
			\sum_{i=1}^{n} R_iR_{i+1}=b
			$
			($R_{n+1}$ is defined to be $R_1$), the maximum of
			$
			O=(\sum_{i=1}^n R_i)^2
			$
			is
			$$
			O=\frac{n}{1-\cos(\lceil\frac{n}{2}\rceil\frac{2}{n}\pi)}(b-1)+n.
			$$
		\end{proposition}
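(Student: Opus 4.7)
The plan is to recast the problem as a spectral optimization over a real symmetric circulant matrix, reducing it to an elementary moment-constrained maximization in the eigenbasis. First, I would introduce the circulant matrix $C$ with entries $C_{ij} = \tfrac{1}{2}$ when $j \equiv i \pm 1 \pmod{n}$ and zero otherwise, so that the cyclic constraint becomes $R^{T} C R = b$, the normalization is $R^{T} R = 1$, and the objective reads $O = (\mathbf{1}^{T} R)^2$ for the all-ones vector $\mathbf{1}$.

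Next I would diagonalize $C$ in a real orthonormal Fourier basis, obtaining eigenvalues $\lambda_k = \cos(2\pi k / n)$ for $k = 0, 1, \ldots, n-1$, with $\mathbf{1}/\sqrt{n}$ being the eigenvector for $\lambda_0 = 1$. Expanding $R = \sum_k c_k v_k$ and setting $w_k = c_k^2 \geq 0$, the optimization becomes: maximize $n w_0$ subject to $\sum_k w_k = 1$ and $\sum_k \lambda_k w_k = b$.

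The core step is then elementary. Isolating the zero mode, the residual weights $\{w_k\}_{k \neq 0}$ form a nonnegative measure of total mass $1 - w_0$ whose mean eigenvalue is $(b - w_0)/(1 - w_0)$; hence this ratio must lie in the interval $[\lambda_{\min}, \lambda'_{\max}]$, where $\lambda_{\min} = \min_{k \neq 0} \lambda_k$ and $\lambda'_{\max} = \max_{k \neq 0} \lambda_k < 1$. Pushing $w_0$ upward, the binding inequality is $(b - w_0)/(1 - w_0) \geq \lambda_{\min}$, which yields
\[
w_0 \;\leq\; \frac{b - \lambda_{\min}}{1 - \lambda_{\min}},
\]
with equality attained by placing all non-zero-mode weight on an eigenvector realizing $\lambda_{\min}$. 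Multiplying by $n$ and identifying $\lambda_{\min} = \cos(\lceil n/2 \rceil \cdot 2\pi/n)$ (by a short case split on the parity of $n$) reproduces the claimed formula after the algebraic rearrangement $n(b-\lambda_{\min})/(1-\lambda_{\min}) = n(b-1)/(1-\lambda_{\min}) + n$.

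I expect the main obstacle to be careful bookkeeping rather than anything conceptually hard: choosing a real orthonormal basis (the complex Fourier modes must be paired into sine and cosine vectors), verifying that achievability at the bound is consistent with $w_k \geq 0$, and confirming the formula for $\lambda_{\min}$ in both parities of $n$. A direct Lagrange-multiplier derivation offers an alternative route to the same answer: stationarity yields a linear cyclic recurrence whose solutions are precisely Fourier modes plus a constant shift, so the same eigenvalue problem emerges without explicit reference to spectral theory.
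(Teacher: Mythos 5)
Your proposal is correct, and it takes a genuinely different route from the paper. The paper proves this proposition by Lagrange multipliers: it derives a cyclic three-term recurrence for the stationary $R_k$, solves the characteristic cubic $x^3+\Lambda x^2-\Lambda x-1=0$ to write $R_k=A_1+A_2X^k+A_3X^{-k}$, and then, through a long algebraic elimination, forces $X$ to be a root of unity, argues that $X=e^{i\lceil n/2\rceil 2\pi/n}$ gives the largest value, and separately disposes of the even case and of the degenerate root $X=-1$ for odd $n$. Your spectral formulation reaches the same answer much more directly: diagonalizing the circulant form turns the problem into maximizing $nw_0$ over nonnegative spectral weights with $\sum_k w_k=1$ and $\sum_k\lambda_k w_k=b$, and the averaging argument on the residual mass immediately gives $w_0\le (b-\lambda_{\min})/(1-\lambda_{\min})$ with an explicit achiever $R=\sqrt{w_0}\,v_0+\sqrt{1-w_0}\,v_{k^*}$. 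This buys two things the paper's argument does not deliver as cleanly: a genuine global upper bound valid for every feasible $R$ (rather than only an analysis of stationary points), and a one-line identification of the extremal eigenvalue $\lambda_{\min}=\cos(\lceil n/2\rceil\cdot 2\pi/n)$ in place of the paper's case split over candidate roots. The bookkeeping you flag is indeed the only real work: you should note explicitly that the eigenvalue $1$ of $C$ is simple (so that $(\mathbf{1}^TR)^2=nw_0$ with no cross terms), and handle $n=2$ separately, since there the two neighbours $i\pm1$ coincide and the quadratic form is $2R_1R_2$ rather than $R^TCR$ with off-diagonal entries $\tfrac12$; both points are routine and do not affect the conclusion.
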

		
		\begin{proof}
			The Lagrangian multiplier of this problem is
			\begin{equation}
			\begin{split}
			F(R,\lambda)&=(\sum_{i=1}^n R_i)^2+\lambda_1 (\sum_{i=1}^{n} R_iR_{i+1}-b)+\lambda_2(\sum_{i=1}^n R_i^2-1).\\
			\end{split}
			\end{equation}
			The derivatives should be zero when $O$ is maximized
			\begin{equation}
			\label{equ-derivative}
			\frac{\partial F}{\partial R_k}=2(\sum_{i=1}^n R_i)+\lambda_1(R_{k-1}+R_{k+1})+2\lambda_2R_k=0.
			\end{equation}
			Sum up the derivatives, we have
			\begin{equation}
			\sum_{k=1}^n\frac{\partial F}{\partial R_k}=2n(\sum_{i=1}^n R_i)+2\lambda_1(\sum_{i=1}^n R_i)+2\lambda_2(\sum_{i=1}^n R_i)=0
			\end{equation}
			and therefore
			\begin{equation}
			\label{equ-sum-lambda}
			\lambda_1+\lambda_2=-n.
			\end{equation}
			In a similar way, we have
			\begin{equation}
			\frac{\partial F}{\partial R_k}R_k=2(\sum_{i=1}^n R_i)R_k+\lambda_1(R_{k-1}R_k+R_kR_{k+1})+2\lambda_2R_k^2=0
			\end{equation}
			and therefore
			\begin{equation}
			\sum_{k=1}^n \frac{\partial F}{\partial R_k}R_k=2O+2\lambda_1b+2\lambda_2=0.
			\end{equation}
			\begin{equation}
			\label{equ-o-lambda}
			O=-\lambda_1 b+\lambda_1+n.
			\end{equation}
			Also, by \autoref{equ-derivative}, we have
			\begin{equation}
			\lambda_1(R_{k-1}+R_{k+1})+2\lambda_2R_k=\lambda_1(R_{k}+R_{k+2})+2\lambda_2R_{k+1},
			\end{equation}
			and therefore
			\begin{equation}
			R_{k}+(2\frac{\lambda_2}{\lambda_1}-1)R_{k-1}-(2\frac{\lambda_2}{\lambda_1}-1)R_{k-2}-R_{k-3}=0.
			\end{equation}
			The formula for each $R_k$ can be obtained by the characteristic equation
			\begin{equation}
			x^3+\Lambda x^2 - \Lambda x-1=0,
			\end{equation}
			where
			\begin{equation}
			\Lambda=(2\frac{\lambda_2}{\lambda_1}-1).
			\end{equation}
			The solutions are
			\begin{equation}
			\label{equ-solu-char}
			x=1,\pm\sqrt{(\frac{\lambda_2}{\lambda_1})^2-1}-\frac{\lambda_2}{\lambda_1}.
			\end{equation}
			Denote the latter two solutions as $X_+$ and $X_-$. Easily, one can verify that we can define $X$ as $X_+=X_-^{-1}=X$. We first point out the relation of $X$ and $\lambda_1$ here. By \autoref{equ-solu-char}, we have
			\begin{equation}
			\label{equ-lambda1X}
			\lambda_1=\frac{2 n X}{(X-1)^2}.
			\end{equation}
			Then, combining with \autoref{equ-o-lambda}, we obtain
			\begin{equation}
			\label{equ-O-X}
			O=\lambda_1 (1-b)+n=\frac{2 n X}{(X-1)^2}(1-b)+n.
			\end{equation}
			\textbf{When $n$ is even:}\\
			First, easily one can prove that $O=0$ when $b=-1$, which is denoted by $O(-1)=0$. When $b=-1$, if any $|R_i|$ is not equal to $|R_{i+1}|$, by rearrangement inequity we have
			\begin{equation}
			\sum_i |R_i| |R_{i+1}|<\sum_i |R_i| |R_i|=1
			\end{equation}
			and
			\begin{equation}
			b=\sum_i R_iR_{i+1}>-\sum_i |R_i| |R_{i+1}|=-1,
			\end{equation}
			which is contradictory to our assumption that $b=-1$. Therefore, we must have $|R_i|=|R_{i+1}|$ and easily one can find that $R_i=-R_{i+1}$ to make $b=-1$. Then, $O(-1)=(\sum_i R_i)^2=0$ when $n$ is even. Finally we have
			\begin{equation}
			O(-1)=0=\frac{4 n X}{(X-1)^2}+n.
			\end{equation}
			Solving this we have $X=-1$, and then $\lambda_1 = - n/2$. Therefore,
			\begin{equation}
			O=\frac{n}{2}(b-1)+n=\frac{n}{1-\cos(\pi)}(b-1)+n=\frac{n}{1-\cos(\lceil\frac{n}{2}\rceil\frac{2}{n}\pi)}(b-1)+n.
			\end{equation}
			\textbf{When $n$ is odd:}\\
			We first assume $X_+\neq X_-$ and therefore $X\neq \pm 1$. The formula for $R_k$ in this case should be
			\begin{equation}
			\label{equ-Rk}
			R_k=A_1+A_2X^k+A_3X^{-k},
			\end{equation}
			where $A_1,A_2$ and $A_3$ are parameters that should be determined. Plug \autoref{equ-Rk} into \autoref{equ-derivative} and notice that $X_++X_-=-2\frac{\lambda_2}{\lambda_1}$, we have
			\begin{equation}
			\begin{split}
			0=&2\sum_{k=1}^n R_k+\lambda_1(A_1+A_2X^{k-1}+A_3X^{-(k-1)}+A_1+A_2X^{k+1}+A_3X^{-(k+1)})+2\lambda_2R_k\\
			=&2\sum_{k=1}^n R_k+\lambda_1(2A_1+A_2X^k(X_-+X_+)+A_3X^{-k}(X_-+X_+))+2\lambda_2R_k\\
			=&2\sum_{k=1}^n R_k+2(\lambda_1+\lambda_2)A_1-2\lambda_2(A_2X^k+A_3X^{-k})+2\lambda_2(A_2X^k+A_3X^{-k})\\
			=&2\sum_{k=1}^n R_k+2(\lambda_1+\lambda_2)A_1 \ .
			\end{split}
			\end{equation}
			Additionally, because of \autoref{equ-sum-lambda}, we have
			\begin{equation}
			\sum_{k=1}^n R_k=nA_1.
			\end{equation}
			In the meantime we can just sum up $R_k$ to get $\sum_{k=1}^n R_k$ as
			\begin{equation}
			\sum_{k=1}^n R_k=nA_1=\sum_{k=1}^n (A_1+A_2X^k+A_3X^{-k}).
			\end{equation}
			Therefore,
			\begin{equation}
			\label{equ-sum2zero}
			\sum_{k=1}^n (A_2X^k+A_3X^{-k})=0,
			\end{equation}
			\begin{equation}
			A_2\frac{X(1-X^n)}{1-X}+A_3\frac{X^{-1}(1-X^{-n})}{1-X^{-1}}=0,
			\end{equation}
			\begin{equation}
			\label{equ-A3A2}
			A_3=A_2\frac{X-X^{n+1}}{1-X^{-n}}=-A_2X^{n+1}.
			\end{equation}
			On the other hand, we have
			\begin{equation}
			\sum_{k=1}^n R_kR_{k+1}=\sum_{k=1}^{n-1} R_kR_{k+1} + R_n R_1 =  A_2^2L_1(X)+A_1^2 n=b  \ ,
			\end{equation}
			where
			\begin{equation}
			L_1(X)=\frac{\left(-X^2+2 X+1\right) X^{2 n}+\left(-n X^4+n+X^4+2 X^3-2 X-1\right) X^n-X^4-2 X^3+X^2}{X^2-1}.
			\end{equation}
			Therefore,
			\begin{equation}
			A_2^2=\frac{b-A_1^2 n}{L_1(X)}.
			\end{equation}
			Plug $A_2^2$ in the the summation of $R_k^2$
			\begin{equation}
			\sum_{k=1}^n R_k^2=A_1^2 n-\frac{2 A_2^2 X \left(-X^{2 n+1}+n \left(X^2-1\right) X^n+X\right)}{X^2-1}=1,
			\end{equation}
			then we have
			\begin{equation}
			\label{equ-A1X}
			A_1^2 n-\frac{2 (b-A_1^2 n) X \left(-X^{2 n+1}+n \left(X^2-1\right) X^n+X\right)}{L_1(X) (X^2-1)}=1.
			\end{equation}
			Also, notice that $O=(\sum_{k=1}^n R_k)^2 =  n^2 A_1^2$ and \autoref{equ-O-X}. We have
			\begin{equation}
			\label{equ-A1X-2}
			n^2 A_1^2=\frac{2 n X}{(X-1)^2}(1-b)+n.
			\end{equation}
			Combine \autoref{equ-A1X} and \autoref{equ-A1X-2}, we have
			\begin{equation}
			\begin{split}
			0&=-\frac{2 b n X}{(X-1)^2}+\frac{2 n X}{(X-1)^2}+n-n^2 A_1^2\\
			&=\frac{2 (b-1) n X (X+1)^2 \left(X-X^n\right) \left(X^n-1\right)}{(X-1)^2 \left((3 X+1) X^{2 n}+(X+1) \left(n (X-1)^2-(X+1)^2\right) X^n+(X+3) X^2\right)}.
			\end{split}
			\end{equation}
			A necessary condition for the above equation is
			\begin{equation}
			2 (1-b) n X^2 (X+1)^2 \left(X^{n-1}-1\right) \left(X^n-1\right)=0,
			\end{equation}
			or equivalently,
			\begin{equation}
			X=0,X=-1,X=e^{i\frac{m}{n}2\pi},X=e^{i\frac{m}{n-1}2\pi}.
			\end{equation}
			We needs to find out which solution of $X$ to adopt. Define $t$ as $X=e^{it}$. Then we have
			\begin{equation}
			\frac{2 X}{(X-1)^2}=\frac{2 e^{i t}}{\left(-1+e^{i t}\right)^2}=\frac{1}{\cos (t)-1}
			\end{equation}
			and
			\begin{equation}
			O=\frac{n}{1-\cos(t)}(b-1)+n.
			\end{equation}
			Obviously $X\neq 0$ because $X\neq -1$, $t\neq \pi$. $O(b)$ is maximized when $1-\cos(t)$ is maximized and $t \mod 2\pi$ should be as closed to $\pi$ as possible. Here, we assume $n>3$ first. For the case $n=3$ it is trivial to see the result is the same. If $X=e^{i\frac{m}{n-1}2\pi}$, $O$ is maximized when $t=\pi\pm\frac{2\pi}{n-1}$. If $X=e^{i\frac{m}{n}2\pi}$, $O$ is maximized when $t=\pi\pm\frac{\pi}{n}$, which is closer to $\pi$ than $t=\pi\pm\frac{2\pi}{n-1}$. Therefore $O$ is maximized when
			\begin{equation}
			t=\pi\pm\frac{\pi}{n}=\frac{n+1}{n}\pi.
			\end{equation}
			Therefore,
			\begin{equation}
			O(b)=\frac{n}{1-\cos(\frac{n+1}{n}\pi)}(b-1)+n=\frac{n}{1-\cos(\lceil\frac{n}{2}\rceil\frac{2}{n}\pi)}(b-1)+n.
			\end{equation}
			Finally, we must deal with the condition when $X_+=X_-=X=-1$. We will prove $X\neq -1$. Notice that, in this case, when $b=-1$, by \autoref{equ-O-X}, $O$ must be zero.
			\begin{equation}
			\label{equ-Rk-degenerate}
			R_k=A_1+(A_2+A_3k)(-1)^k.
			\end{equation}
			Because $O=0$,
			\begin{equation}
			\sum_{k=1}^n R_k=\frac{1}{2} \left(2 A_1 n- A_3 n-2 A_2- A_3\right)=0,
			\end{equation}
			\begin{equation}
			A_1=\frac{A_3 n+2 A_2+A_3}{2 n}.
			\end{equation}
			Calculate $\sum_{k=1}^n R_k^2$ and $\sum_{k=1}^n R_kR_{k+1}$ and substitute $A_1$ by above equation, we have
			\begin{equation}
			\begin{split}
			\sum_{k=1}^n R_k^2=1=&A_1^2 n+A_2^2 n+\frac{1}{6} A_3^2 n (n+1) (2 n+1)+A_3 (n+1) \left(A_2 n-A_1\right)-2 A_2 A_1\\
			=&\frac{1}{12 n}\left(n^2-1\right) \left(A_3^2 \left(4 n^2+6 n+3\right)+12 A_3 A_2 (n+1)+12 A_2^2\right),
			\end{split}
			\label{eq:A2}
			\end{equation}
			\begin{equation}
			\begin{split}
			\sum_{k=1}^n R_kR_{k+1}=-1=&A_1^2 n-A_1 \left(A_3 (n+1)+2 A_2\right)-\frac{1}{3} (n-2) \left(3 A_3 A_2 (n+1)+A_3^2 n (n+2)+3 A_2^2\right)\\
			=&-\frac{1}{12 n}(n-1) \left(12 A_3 A_2 \left(n^2-1\right)+12 A_2^2 (n-1)+A_3^2 (n (4 n (n+1)-9)-3)\right).
			\end{split}
			\end{equation}
			By the above two equations we can get
			\begin{equation}
			\frac{(n-1) \left(A_3^2 (-(n-3)) (n+1)-6\right)}{6 (n+1)}=-1,
			\end{equation}
			which has no solution when $n=3$. Thus, $X\neq -1$ when $n=3$. Assume $n>3$, we have
			\begin{equation}
			A_3^2=\frac{12}{(n-3) (n-1) (n+1)} .
			\end{equation}
			Viewing \autoref{eq:A2} as a quadratic equation for $A_2$, the discriminant is
			\begin{equation}
			\Delta = \frac{48 n (12 + A_3^2(n-n^3))}{n^2-1} = \frac{1728 n}{(3-n)(n^2-1)}.
			\end{equation}
			Therefore, for $n>3$, $A_2$ does not have real solution. Thus, we proved that $X\neq -1$ for every odd $n$ ($n>1$).
			\\
			\textbf{In conclusion, for both even and odd $n$}:
			\begin{equation}
			O=\frac{n}{1-\cos(\lceil\frac{n}{2}\rceil\frac{2}{n}\pi)}(b-1)+n.
			\end{equation}
			
		\end{proof}

		\begin{restatable}{corollary}{maximumo}
			\label{thm-maximumo}
			Let $R$ be a vector of  $n$ real numbers. Under the condition that  $\sum_{i=1}^{n} R_i^2=l\,(l\in(0,1])$ and
			$
			\sum_{i=1}^{n} R_iR_{i+1}=b
			$
			($R_{n+1}$ is defined to be $R_1$), the maximum of
			$
			O=(\sum_{i=1}^n R_i)^2
			$
			is
			$$
			O=\frac{n}{1-\cos(\lceil\frac{n}{2}\rceil\frac{2}{n}\pi)}(b-l)+nl.
			$$
		\end{restatable}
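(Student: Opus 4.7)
The plan is to reduce this corollary to the previously established $l=1$ case (Proposition \ref{thm-omax-1}) by a simple homogeneous rescaling of the variables. The constraint $\sum R_i^2 = l$ differs from the unit-norm constraint only by an overall scaling factor, and both the objective $O$ and the coupling $\sum R_i R_{i+1}$ are quadratic in $R$, so the whole problem rescales cleanly.

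Concretely, I would introduce new variables $\tilde R_i := R_i/\sqrt{l}$, which is well defined since $l>0$. Then $\sum_i \tilde R_i^2 = 1$ and $\sum_i \tilde R_i \tilde R_{i+1} = b/l$, and the objective becomes $O = l \cdot (\sum_i \tilde R_i)^2$. The tilded problem is exactly the one solved in Proposition \ref{thm-omax-1} with parameter $\tilde b := b/l$, so its maximum is
\begin{equation}
\max \Bigl(\sum_i \tilde R_i\Bigr)^2 = \frac{n}{1-\cos(\lceil n/2\rceil \frac{2}{n}\pi)}\Bigl(\tfrac{b}{l}-1\Bigr)+n.
\end{equation}
Multiplying through by $l$ and distributing yields the claimed formula
\begin{equation}
\max O = \frac{n}{1-\cos(\lceil n/2\rceil \frac{2}{n}\pi)}(b-l)+nl.
\end{equation}

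There is essentially no obstacle here: the only thing to check is that the bijection $R\leftrightarrow\tilde R$ preserves the feasible set, i.e.\ that every $\tilde R$ satisfying $\sum\tilde R_i^2=1$ and $\sum\tilde R_i\tilde R_{i+1}=b/l$ corresponds to a valid $R$ with $\sum R_i^2=l$ and $\sum R_iR_{i+1}=b$, and vice versa. This is immediate from the definitions, so the supremum is attained on the rescaled problem exactly when it is attained on the original, and both maxima coincide up to the factor of $l$ already accounted for. The restriction $l\in(0,1]$ plays no role in the argument beyond ensuring $\sqrt{l}$ is a positive real number, so the same proof would in fact work for any $l>0$; the bound $l\le 1$ merely reflects the range of interest arising from the normalization of the underlying quantum state in Proposition \ref{thm-pass-soundness}.
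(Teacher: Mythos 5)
Your proposal is correct and uses essentially the same argument as the paper: both rescale by $\sqrt{l}$ to reduce to the unit-norm case of Proposition \ref{thm-omax-1}, the paper merely phrasing the final step as a proof by contradiction where you multiply through directly. Your added remarks on attainment and on $l>1$ are harmless refinements of the same idea.
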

		\begin{proof}
			Assume the $R$ that maximized $O$ under the condition that $\sum_{i=1}^nR_i^2=l$ and $\sum_{i=1}^nR_iR_{i+1}=b$. Then the vector $T=\frac{R}{\sqrt{l}}$ satisfies the condition that $\sum_{i=1}^nT_i^2=1$ and $\sum_{i=1}^nT_iT_{i+1}=\frac{b}{l}$. By \autoref{thm-omax-1} we know that
			\begin{equation}
			\label{equ-maximumo1}
			(\sum_{i=1}^nT_i)^2=\frac{1}{l}(\sum_{i=1}^nR_i)^2\leq \frac{n}{1-\cos(\lceil\frac{n}{2}\rceil\frac{2}{n}\pi)}(\frac{b}{l}-1)+n.
			\end{equation}
			If the theorem we want to prove is not true and
			\begin{equation}
			(\sum_{i=1}^nR_i)^2>\frac{n}{1-\cos(\lceil\frac{n}{2}\rceil\frac{2}{n}\pi)}(b-l)+nl,
			\end{equation}
			then
			\begin{equation}
			\label{equ-maximumo2}
			(\sum_{i=1}^nT_i)^2=\frac{1}{l}(\sum_{i=1}^nR_i)^2>\frac{n}{1-\cos(\lceil\frac{n}{2}\rceil\frac{2}{n}\pi)}(\frac{b}{l}-1)+n.
			\end{equation}
			\autoref{equ-maximumo1} and \autoref{equ-maximumo2} are contradictory and then we prove the theorem.
		\end{proof}
		
		\begin{theorem}[\textbf{Babai}~\citep{Babai91}]
			\label{thm-babai-elementsampler}
			For any group oracle B there exists a randomized process $\mathcal{P}$ acting as follows. On input $g_1,\dots,g_k\in G(B_n)$ and $\epsilon > 0$, $\mathcal{P}$ outputs an element of $H = \langle g_1,\dots, g_k \rangle$ in time polynomial in $n + \log 1/\epsilon$ such that each $g\in H$ is output with probability in the range $(1/|H|-\epsilon, 1/|H| + \epsilon)$.
		\end{theorem}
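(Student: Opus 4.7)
The plan is to follow Babai's approach based on \emph{random subproducts} of generating elements. The central primitive is: given a sequence $(h_1,\dots,h_t)$ of group elements, a random subproduct is $h_1^{\epsilon_1}\cdots h_t^{\epsilon_t}$, where each $\epsilon_i\in\{0,1\}$ is an independent unbiased coin. In the black-box model such a subproduct is computed with $t$ oracle multiplications, so the per-sample cost is polynomial in $t$.

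First, I would establish the key mixing lemma: for any generating sequence of $H$ of length $t=\Omega(\log|H|+\log 1/\epsilon)$, the distribution of a random subproduct is within statistical distance $\epsilon$ of the uniform distribution on $H$. This is the Erdos--Renyi-type bound which Babai extended to the non-abelian setting; the argument proceeds by an $L^2$ / Fourier-analytic estimate on $H$, bounding the contributions from non-trivial irreducible characters and using the fact that a non-trivial character cannot systematically correlate with a random $\pm 1$ sign pattern over a generating sequence.

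Second, since only $k$ generators are given rather than a sufficiently long sequence, I would lengthen the working sequence by iteratively appending random subproducts together with selected conjugates, roughly doubling its length at each round. Because $|H|\leq 2^n$ in the black-box model, $O(\log|H|)=O(n)$ doubling rounds suffice, and the whole procedure still runs in time polynomial in $n$. Each new element so produced is, by the previous step, close to uniform on the subgroup already generated, which is what makes the doubling converge.

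The main obstacle is the \emph{bootstrap}: a priori one knows neither $|H|$ nor whether the current working set actually generates all of $H$, so the lengthening must be interleaved with a Schreier--Sims style tower-of-stabilizers construction that detects when only a proper subgroup has been sampled and repairs it by adjoining freshly discovered elements. Controlling the failure probability across all $O(n)$ levels of the tower, and combining it with the subproduct mixing bound, is the delicate part; a union bound over the levels together with the mixing estimate then yields the claimed accuracy $\epsilon$ and overall running time polynomial in $n+\log 1/\epsilon$.
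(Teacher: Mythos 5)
This theorem is not proved in the paper at all: it is imported verbatim from Babai's 1991 paper as \autoref{thm-babai-elementsampler} and used as a black box, so there is no in-paper argument to compare yours against. Judged on its own terms, your sketch points at the right body of work (random subproducts and length doubling, which is indeed the engine of Babai's algorithm), but the ``key mixing lemma'' you state in your first step is false as written. It is not true that for \emph{any} generating sequence of length $t=\Omega(\log|H|+\log 1/\epsilon)$ the random subproduct $h_1^{\epsilon_1}\cdots h_t^{\epsilon_t}$ is $\epsilon$-close to uniform: take $H=\mathbb{Z}_N$ and the sequence consisting of the generator $1$ repeated $t$ times; the subproduct is then binomially distributed on $\{0,\dots,t\}$ and is nowhere near uniform unless $t\gg N$, i.e.\ exponentially long. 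The Erd\H{o}s--R\'enyi statement requires the $h_i$ to be independent \emph{uniformly random} elements of $H$, which is exactly what one does not yet have --- that circularity is the whole difficulty of the theorem, and your sketch assumes it away.

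The second structural problem is the bootstrap you propose. A Schreier--Sims tower of stabilizers is a tool for permutation groups and is not available for a black-box group given only by an oracle for multiplication and identity testing; Babai's point was precisely to avoid any such structural machinery. His actual argument replaces both of your steps: one builds \emph{random cubes} $C(h_1,\dots,h_t)=\{h_1^{\epsilon_1}\cdots h_t^{\epsilon_t}\}$, appends a random element of the current cube to double its length, and shows via a combinatorial dichotomy that either the cube's support roughly doubles in size or it is already a large fraction of $H$; since $|H|\le 2^n$, polynomially many doublings suffice. Near-uniformity of the final sample is then extracted not from character sums but from Babai's local expansion lemma for vertex-transitive graphs applied to the Cayley-like graph on $H$ defined by the cube. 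Without the cube-doubling dichotomy and the local expansion lemma, neither the running time bound polynomial in $n+\log 1/\epsilon$ nor the $(1/|H|-\epsilon,\,1/|H|+\epsilon)$ guarantee follows from what you have written.
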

		
\subsection{More experimental results}

For the input state $\frac{1}{\sqrt{2}}(\ket{\mathsf{B}}+\ket{\mathsf{AB}})=\frac{1}{\sqrt{2}}(\ket{HH}+\ket{VV})$, we present the detailed experimental imaginary matrix of the outputs here, as shown in Fig. \ref{figs:Expim}.

\begin{figure*}[htbp]
\centering
\includegraphics[clip=true, width=\linewidth]{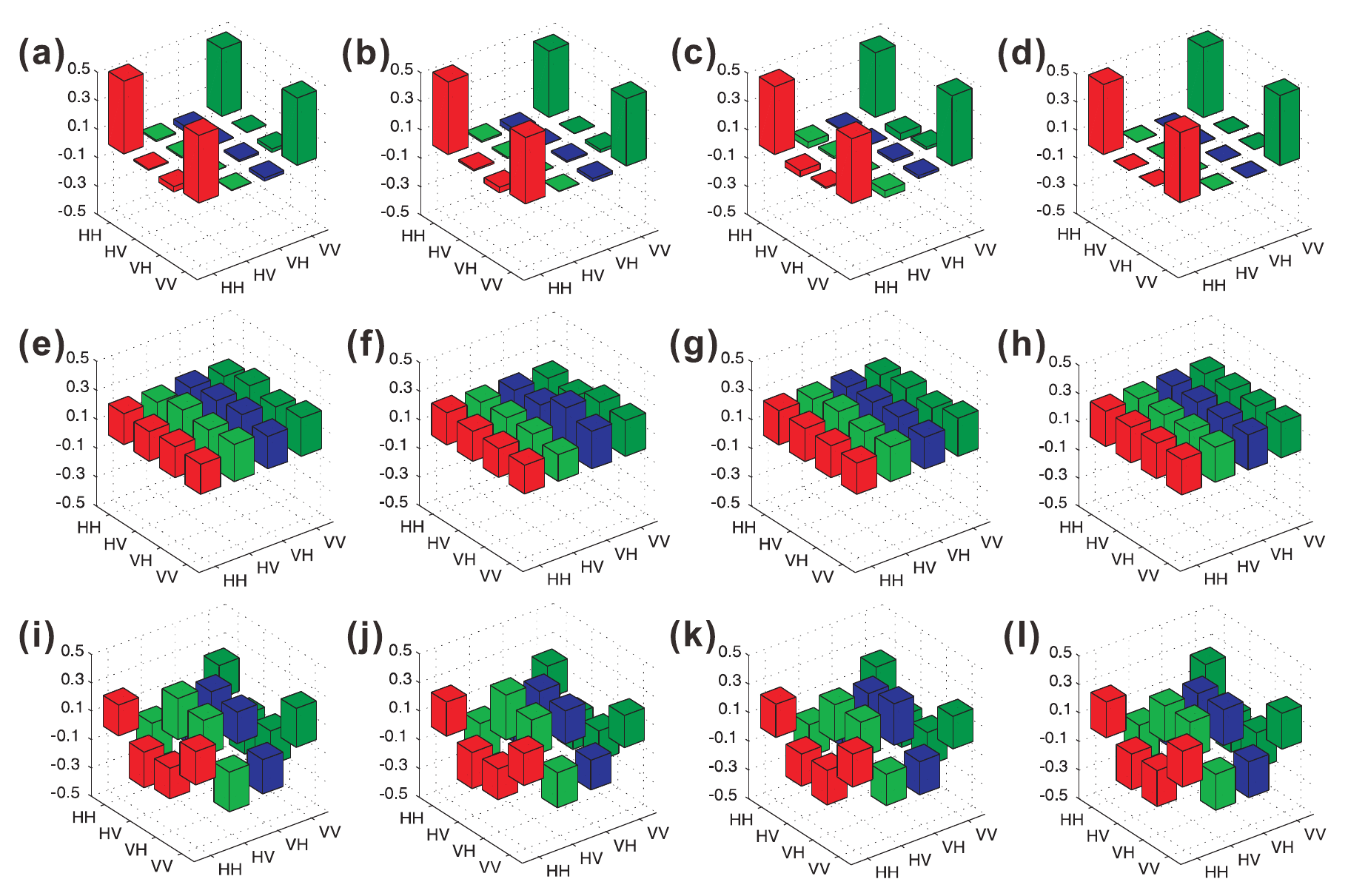} %
\caption{(Color online). Real parts of density matrices of the finial output photons for the case with input state of $(\ket{HH}+\ket{VV})/\sqrt{2}$. {\bf a-c} represent the cases of initial state, output photons of $a3$ and $b$ in $\E$-type interferometer, output photons of $a$ and $b3$ in $\mathsf{AB}$-type interferometer without CNOT gate, respectively. {\bf e-g} represent the cases of output photons of $a3$ and $b3$ in $\A$-type interferometer (with fidelity $92.6\pm2.4\%$), $a3$ and $b$ in $\B$-type interferometer ($88.9\pm0.7\%$), $a$ and $b3$ in $\mathsf{AB}$-type interferometer ($88.5\pm1.2\%$), respectively. {\bf i-k} represent the cases of output photons of $a4$ and $b4$ in $\A$-type interferometer ($98.0\pm0.3\%$), $a4$ and $b$ in $\B$-type interferometer ($94.4\pm0.3\%$), $a$ and $b4$ in $\mathsf{AB}$-type interferometer ($94.8\pm0.9\%$), respectively. {\bf d, h} and {\bf l} represent the corresponding theoretical predictions.}\label{fig:Expre}
\end{figure*}

\begin{figure}[t]
\centering
\includegraphics[clip=true, width=\linewidth]{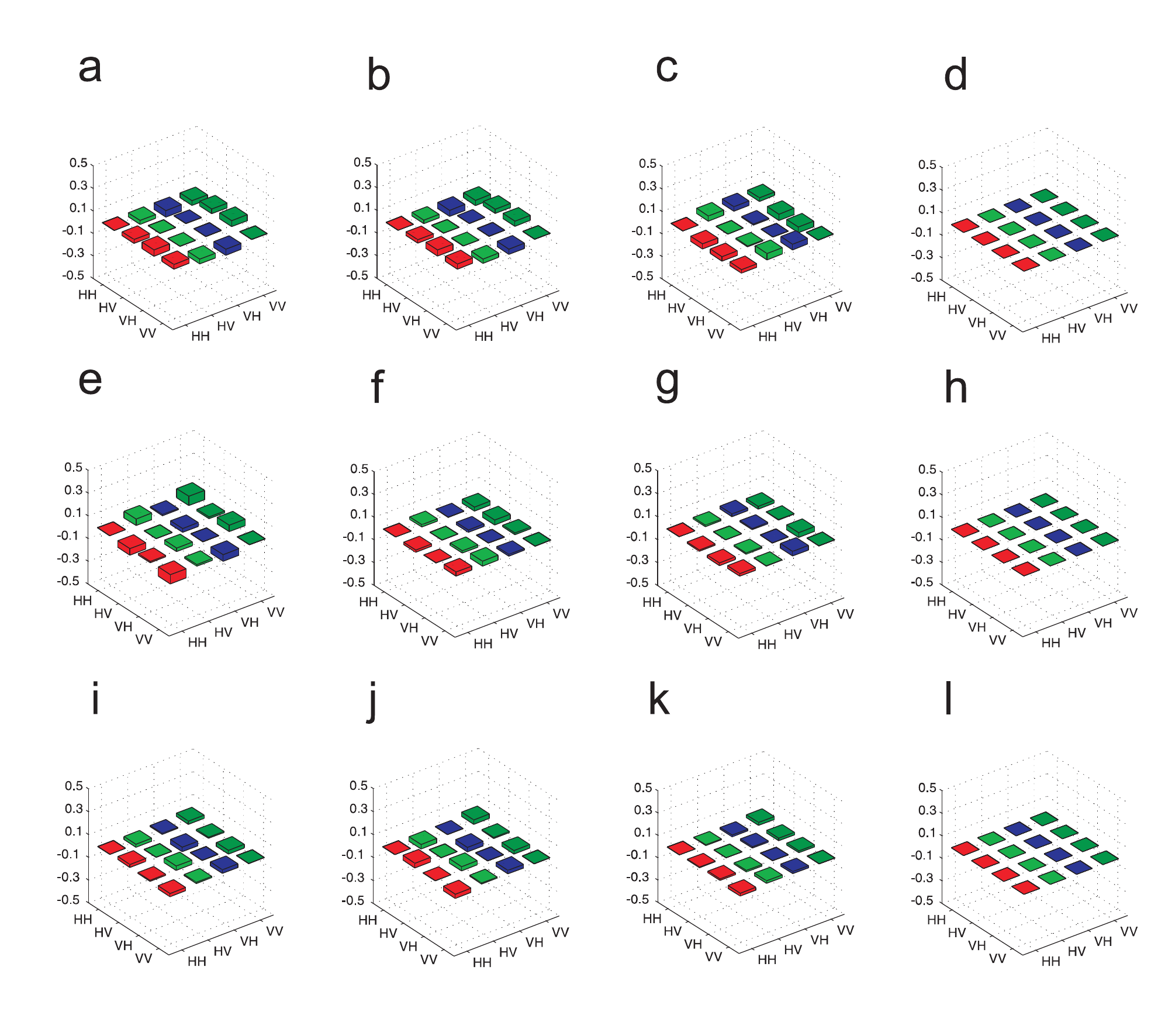} %
\caption{(Color online). Imaginary parts of density matrices of the finial output photons for the case with input state of $\frac{1}{\sqrt{2}}(\ket{HH}+\ket{VV})$. {\bf a-c} represent the cases of initial state, output photons of $a3$ and $b$ in $\E$-type interferometer, output photons of $a$ and $b3$ in $\mathsf{AB}$-type interferometer without CNOT gate, respectively. {\bf e-g} represent the cases of output photons of $a3$ and $b3$ in $\A$-type interferometer, $a3$ and $b$ in $\B$-type interferometer, $a$ and $b3$ in $\mathsf{AB}$-type interferometer, respectively. {\bf i-k} represent the cases of output photons of $a4$ and $b4$ in $\A$-type interferometer, $a4$ and $b$ in $\B$-type interferometer, $a$ and $b4$ in $\mathsf{AB}$-type interferometer, respectively. {\bf d, h} and {\bf l} represent the corresponding theoretical predictions.}\label{figs:Expim}
\end{figure}

	\end{appendix}
\end{document}